
\documentclass{llncs}
\usepackage{cite}
\usepackage{xr}

\usepackage{makeidx}  

\usepackage{graphicx}        
\usepackage{multicol}        
\usepackage{multirow}
\usepackage[bottom]{footmisc}
\usepackage{amsmath}                      
\usepackage{amsfonts}                     
\usepackage{mathrsfs}                     
\usepackage{amssymb}
\usepackage{float}
\usepackage{graphicx} 
\usepackage{nicefrac}
\usepackage{subfig}
\usepackage[vlined,linesnumbered,boxruled]{algorithm2e}
\graphicspath{../graphics/}
\usepackage[svgnames]{xcolor}

\usepackage{rotating}
\usepackage{sidecap}

\usepackage[dvips,frame,rotate,import]{xy}                   

\DeclareMathOperator*{\argmin}{arg\,min}

\newcommand{\nonnegS}{\mbox{N-SSSP}\xspace}
\newcommand{\posS}{\mbox{P-SSSP}\xspace}
\newcommand{\nonnegBoundedS}{\mbox{BN-SSSP}\xspace}
\newcommand{\posBoundedS}{\mbox{BP-SSSP}\xspace}

\newcommand{\bigOh}[1]{{O}(#1)}

\newcommand{\bigTheta}[1]{{\Theta}(#1)}

\newcommand{\graph}{G}
\newcommand{\edgeSet}{E}
\newcommand{\vertexSet}{V}

\newcommand{\numNodes}{n}
\newcommand{\numEdges}{m}

\newcommand{\parent}{p}

\newcommand{\shortestPathTree}{S}

\newcommand{\guess}{D}
\newcommand{\actual}{d}
\newcommand{\key}{\hat{D}}

\newcommand{\edge}{\varepsilon}
\newcommand{\node}{v}
\newcommand{\startNode}{s}

\newcommand{\pathSequence}{P}

\newcommand{\edgeLength}[1]{\|#1\|}

\newcommand{\minEdgeLength}{\delta}
\newcommand{\maxEdgeLength}{C}

\newcommand{\heap}{H}

\newcommand{\wordLen}{w}

\newcommand{\smallThing}{\epsilon}

\newcommand{\updateValue}[2]{\mathtt{updateValue}(#1,#2)}
\newcommand{\extractMin}[1]{\mathtt{extractMin}(#1)}

\newcommand{\fullOrdering}{\mbox{FO}\xspace}
\newcommand{\partialOrdering}{\mbox{$\minEdgeLength$-PO}\xspace}
\newcommand{\alternativeOrdering}{\mbox{AO}\xspace}

\providecommand{\dontprintsemicolon}{}




\begin{document}

\title{On Solving Floating Point SSSP Using an Integer Priority Queue}
\titlerunning{On Solving Floating Point SSSP Using an Integer Priority Queue}  
%
\author{Michael Otte}
\authorrunning{Michael Otte} 
%
\tocauthor{Michael Otte}
\institute{Work done at University of Colorado at Boulder\\ Dept. Aerospace Engineering Sciences}

\maketitle              

\begin{abstract}
We address the single source shortest path planning problem (SSSP) in the case of floating point edge weights.
We show how any {\it integer} based Dijkstra solution that relies on a monotone {\it integer} priority queue to create a full ordering over path lengths in order to solve {\it integer} SSSP can be used as an oracle to solve {\it floating point} SSSP with positive edge weights (floating point \posS).  Floating point \posS is of particular interest to the robotics community.
This immediately yields a handful of faster runtimes for floating point \posS; for example, ${\bigOh{\numEdges + \numNodes\log \log \frac{\maxEdgeLength}{\minEdgeLength}}}$, where  $\maxEdgeLength$ is the largest weight and $\minEdgeLength$ is the minimum edge weight in the graph. It also ensures that many future advances for integer SSSP will be transferable to floating point \posS.
Our work relies on a result known to \cite{dinic.78} and \cite{tsitsiklis.95}; that, under the right conditions, SSSP can be solved using a partial ordering of nodes, despite the fact that full orderings are typically used in practice. Thus, priority queues that {\it do not} extract keys in a monotonically nondecreasing order can be used to solve SSSP under a special set of conditions that always hold for floating point \posS.
In particular, any node that has a shortest-path-length (key value) within $\minEdgeLength$ of the queue's minimum key can be extracted (instead of the node with the minimum key) from the priority queue and Dijkstra's algorithm will still run correctly.
Our contribution is to show how {\it any} monotonic {\it integer} priority queue can be transformed into a suitable $\minEdgeLength$-nonmonotonic {\it floating point} priority queue in order to produce the necessary partial ordering for floating point values. Monotonic integer keys for floating point values are obtained by dividing the floating point key values by $\minEdgeLength$ (as was done by \cite{dinic.78}) and then converting the result to an intege. The loss of precision this division causes is the reason the resulting floating point queue is   $\minEdgeLength$-nonmonotonic instead of monotonic, but does not break the correctness of Dijkstra's algorithm.
We also prove that the floating point version of {\it non-negative} SSSP (which allows $\minEdgeLength = 0$ in addition to $\minEdgeLength > 0$) cannot be solved without creating a full ordering of nodes (and thus requires a fully monotonic priority heap); and so our method cannot be extended to work on non-negative floating point SSSP, in general.

\end{abstract}

\section{Introduction} \label{sec:introduction}

Finding the shortest path through a graph $\graph = (\edgeSet, \vertexSet)$ defined by sets of nodes ${\vertexSet}$ and edges $\edgeSet$ is a classic problem.
%
%
The  variation known as the ``single source shortest-path planning problem'' (or SSSP) is concerned with finding all of the shortest-paths from a particular node ${\startNode \in \vertexSet}$ to all other nodes ${\node \in \vertexSet}$, where each edge ${(u,\node) = \edge \in \edgeSet}$ is associated with a length $\edgeLength{\edge}$. 
The first algorithm that solves SSSP was presented by Dijkstra in the 1950s using an algorithm that runs in time $\bigOh{\numEdges + \numNodes^2}$ for the case of nonnegative integer edge lengths \cite{Dijkstra}. 
Over the years, the SSSP problem has been studied extensively by computer scientists, who have primarily focus on the classic integer version of the problem. The robotics community, on the other hand, is more interested in floating point solutions to SSSP. 
In robotics, graphs are embedded in the environment\footnote{In many cases the graph used for robotics is embedded within the configuration space of the robot (the space created as the product space over the degrees of freedom of the system), of which the physical environment is  a lower dimensional projection.} and edge lengths represent  distances along physical trajectories that a robot traverses to move between different locations. Movement is subject to the practical geometric constraints of the environment (topology and obstacles) as well as the dynamics and kinomatics of the robotic system. Thus, finding suitable trajectories (and their edge lengths) often involves solving a set of differential equations or other two-point boundary value problem. In short, the most natural numerical representation for robotics SSSP problems is not integers.

\newcommand{\hspc}{@{\hspace{1em}}}
\definecolor{shadecolor}{RGB}{255,255,000}
\newcommand{\shadeThis}[1]{\colorbox{shadecolor}{#1}}
\newcommand{\citeTiny}[1]{\tiny{\cite{#1}}}
\newcommand{\citeFoot}[1]{\footnotesize{\cite{#1}}}
\newcommand{\rot}{\rotatebox{90}}

\begingroup
\def\arraystretch{1.5}
\begin{table}[t]
\caption{Best theoretical runtimes prior to our work (with references)} \label{table:before}

\makebox[\textwidth][c]{
\centering
\begin{tabular}{l \hspc r | l \hspc c l |  l \hspc l |}
& & \multicolumn{3}{c |}{Directed (SSSP)} & \multicolumn{2}{c |}{Undirected (USSSP)}\\
\hline

\multirow{9}{.2cm}{\rot{\normalsize{Integer}}} & 
          \multirow{6}{.2cm}{\rot{\small{worst case}}}&
                       $\bigOh{\numEdges + \numNodes \log \log \numNodes}\,\,$ &
                       &
                       \citeFoot{Thorup4} &
                       $\bigOh{\numEdges + \numNodes}$ & 
                       \citeFoot{Thorup}\\

        & 
                     &
                       $\bigOh{\numEdges + \numNodes \log \log C}\,\,$ &
                       &
                       \citeFoot{Thorup4} &
                       &
                       \\

        & 
                     &
                       $\bigOh{\numEdges + \numNodes(\log C \log \log C)^{1/3}}\,\,$ &
                       &
                       \citeFoot{Raman2} & 
                       &
                       \\

        &
                    &
                       ${\bigOh{\numNodes + \numEdges\log\log\numEdges}}$&
                       &
                       \citeFoot{Thorup3}&
                       &
                       \\

        &
                   &
                      ${\bigOh{\numNodes + \numEdges\log\log C}}$ &
                      &
                      \citeFoot{Raman2} & 
                      &
                      \\

        &
                   &
                      ${\bigOh{\numEdges + \numNodes(B+C/B)}}$ s.t. ${B < C+1}$ & 
                      &
                      \citeFoot{Cherkassky.etal.MP96} & 
                      &
                      \\

        &
                   &
                      ${\bigOh{\numEdges\Delta + \numNodes(\Delta + C/\Delta)}}$ s.t. $0 < \Delta < C$&
                      &
                      \citeFoot{Cherkassky.etal.MP96} &
                      &
                      \\

\cline{2-7}
               &
        \multirow{3}{.2cm}{\rot{\normalsize{expected}}} & 
                      $\bigOh{\numEdges + \numNodes \log \log \numNodes}\,\,$ &
                       &
                       \citeFoot{Thorup4}&
                      &
                      \\

        &
                    & 
                      $\bigOh{\numEdges + \numNodes \log \log C}\,\,$ &
                       &
                       \citeFoot{Thorup4} &
                      &
                      \\

        &
                    & 
                      ${\bigOh{\numEdges\sqrt{\log \log \numNodes }}}$&
                      &
                      \citeFoot{Thorup5}&
                      &
                      \\

\hline
\multirow{5}{.2cm}{\rot{\normalsize{Floating Point}}} &
         \multirow{3}{.2cm}{\rot{\small{worst case}}} &
                      ${\bigOh{\numEdges + \numNodes\log \log\numNodes}}$&
                      &
                      \citeFoot{Thorup4}&
                      $\bigOh{\numEdges + \numNodes}$&
                      \citeFoot{Thorup2}\\

               &
                    &
                     $\bigOh{\numEdges + \numNodes \frac{\log \frac{\maxEdgeLength}{\minEdgeLength}}{\log \log \frac{\maxEdgeLength}{\minEdgeLength}}}$ &
                     *&
                     \citeFoot{Cherkassky.etal}&
                     &
                     \\

               &
                    &
                      &
                      &
                      + \citeFoot{tsitsiklis.95}&
                     &
                     \\

\cline{2-7}
        &
           \multirow{3}{.2cm}{\rot{\small{expected}}} &
                      ${\bigOh{\numEdges + \numNodes\log \log\numNodes}}$&
                      &
                      \citeFoot{Thorup4}&
                      &
                      \\

        & 
                 &
                      $\bigOh{\numEdges + \numNodes(\log \frac{\maxEdgeLength}{\minEdgeLength})^{1/3+\epsilon}}$ &
                      *&
                      \citeFoot{Cherkassky.etal}&
                      &
                      \\

        &
                    & 
                      &
                      &
                      + \citeFoot{tsitsiklis.95}&
                      &
                      \\

        &
                    & 
                      ${\bigOh{\numEdges\sqrt{\log \log \numNodes }}}$&
                      &
                      \citeFoot{Thorup5}&
                      &
                      \\

\hline
\end{tabular}
}

\makebox[\textwidth][c]{
\centering
\begin{tabular}{l}
Results with (*) assume strictly {\bf positive edges}, (\posS), ${0 < \minEdgeLength \leq \edgeLength{\edge} \leq \maxEdgeLength < \infty}$,  ${\forall \edge \in \edgeSet}$.\\ Other results hold for both positive and  {\bf nonnegative} edges (\nonnegS), ${0 \leq \edgeLength{\edge} \leq \maxEdgeLength < \infty}$. \\
\end{tabular}
}

\end{table}
\endgroup

The main contribution of this paper is to show how a large class of solutions for integer SSSP can easily be modified to solve floating point SSSP with strictly positive edge weights (floating point \posS). Our work is applicable to both theoretical performance bounds, as well as the practical implementation of SSSP algorithms. We note that our method breaks if edges are allowed to have length $0$. This is usually not a concern for robotics applications, where an edge of length $0$ implies that the two nodes it connects represent the same location in the environment\footnote{Or configuration space.} --- without loss of generality\footnote{At least in the robotics domain, where the graph is just a tool for reasoning about environmental connectivity with respect to the movement constraints of the robot.} these can be combined into a single node and the $0$ length edge removed from the graph.

\begingroup
\def\arraystretch{1.5}
\begin{table}[h!]
\caption{Best theoretical runtimes after our work (with references)} \label{table:after}

\makebox[\textwidth][c]{
\centering
\begin{tabular}{l \hspc r | l \hspc c l |  l \hspc l |}
& & \multicolumn{3}{c |}{Directed (SSSP)} & \multicolumn{2}{c |}{Undirected (USSSP)}\\
\hline

\multirow{9}{.2cm}{\rot{\normalsize{Integer}}}  & 
          \multirow{6}{.2cm}{\rot{\normalsize{worst case}}} & 
                       $\bigOh{\numEdges + \numNodes \log \log \numNodes}\,\,$ &
                       &
                       \citeFoot{Thorup4} &
                       $\bigOh{\numEdges + \numNodes}$ & 
                       \citeFoot{Thorup}\\

        & 
                     &
                       $\bigOh{\numEdges + \numNodes \log \log C}\,\,$ &
                       &
                       \citeFoot{Thorup4} &
                       &
                       \\

        & 
                     &
                       $\bigOh{\numEdges + \numNodes(\log \maxEdgeLength \log \log \maxEdgeLength)^{1/3}}\,\,$ &
                       &
                       \citeFoot{Raman2} & 
                       &
                       \\

        &
                    &
                       ${\bigOh{\numNodes + \numEdges\log\log\numEdges}}$&
                       &
                       \citeFoot{Thorup3}&
                       &
                       \\

        &
                   &
                      ${\bigOh{\numNodes + \numEdges\log\log \maxEdgeLength}}$ &
                      &
                      \citeFoot{Raman2} & 
                      &
                      \\

        &
                   &
                      ${\bigOh{\numEdges + \numNodes(B+\maxEdgeLength/B)}}$ s.t. ${B < \maxEdgeLength+1}$ & 
                      &
                      \citeFoot{Cherkassky.etal.MP96} & 
                      &
                      \\

        &
                   &
                      ${\bigOh{\numEdges\Delta + \numNodes(\Delta + \maxEdgeLength/\Delta)}}$ s.t. $0 < \Delta < C$&
                      &
                      \citeFoot{Cherkassky.etal.MP96} &
                      &
                      \\

\cline{2-7}
        &
           \multirow{3}{.2cm}{\rot{\normalsize{expected}}} & 
                      $\bigOh{\numEdges + \numNodes \log \log \numNodes}\,\,$ &
                       &
                       \citeFoot{Thorup4}&
                      &
                      \\

        &
                    & 
                      $\bigOh{\numEdges + \numNodes \log \log C}\,\,$ &
                       &
                       \citeFoot{Thorup4} &
                      &
                      \\

        &
                    & 
                      ${\bigOh{\numEdges\sqrt{\log \log \numNodes }}}$&
                      &
                      \citeFoot{Thorup5}&
                      &
                      \\

\hline
\multirow{10}{.2cm}{\rot{\normalsize{Floating Point}}}  & 
          \multirow{7}{.2cm}{\rot{\normalsize{worst case}}} & 
                      ${\bigOh{\numEdges + \numNodes\log \log\numNodes}}$&
                      &
                      \citeFoot{Thorup4}&
                      $\bigOh{\numEdges + \numNodes}$&
                      \citeFoot{Thorup2}\\

        &
                    &
                      \shadeThis{${\bigOh{\numEdges + \numNodes\log \log \frac{\maxEdgeLength}{\minEdgeLength}}}$} & 
                      *&
                      \footnotesize{new result} & 
                      &
                      \\

       & 
                     &
                       \shadeThis{$\bigOh{\numEdges + \numNodes(\log \frac{\maxEdgeLength}{\minEdgeLength} \log \log \frac{\maxEdgeLength}{\minEdgeLength})^{1/3}}$} &
                       *&
                       \footnotesize{new result} & 
                       &
                       \\

        &
                    &
                       \shadeThis{${\bigOh{\numNodes + \numEdges\log\log\numEdges}}$}&
                       *&
                       \footnotesize{new result} & 
                       &
                       \\

        &
                   &
                      \shadeThis{${\bigOh{\numNodes + \numEdges\log\log \frac{\maxEdgeLength}{\minEdgeLength}}}$}&
                      *&
                      \footnotesize{new result} & 
                      &
                      \\

        &
                   &
                      \shadeThis{${\bigOh{\numEdges + \numNodes(B+\frac{\maxEdgeLength}{\minEdgeLength B})}}$ s.t. ${B < \maxEdgeLength+1}$}& 
                      *&
                      \footnotesize{new result} & 
                      &
                      \\

        &
                   &
                      \shadeThis{${\bigOh{\numEdges\Delta + \numNodes(\Delta + \frac{\maxEdgeLength}{\minEdgeLength \Delta})}}$ s.t. $0 < \Delta < C$}&
                      *&
                      \footnotesize{new result} & 
                      &
                      \\

\cline{2-7}
        &
                     \multirow{3}{.2cm}{\rot{\small{expected}}} &
                      ${\bigOh{\numEdges + \numNodes\log \log\numNodes}}$&
                      &
                      \citeFoot{Thorup4}&
                      &
                      \\

        &
                    &
                      \shadeThis{${\bigOh{\numEdges + \numNodes\log \log \frac{\maxEdgeLength}{\minEdgeLength}}}$} & 
                      *&
                      \footnotesize{new result} & 
                      &
                      \\

        &
                    & 
                      ${\bigOh{\numEdges\sqrt{\log \log \numNodes }}}$&
                      &
                      \citeFoot{Thorup5}&
                      &
                      \\

\hline
\end{tabular}
}

Results with (*) assume strictly {\bf positive edges}, (\posS), ${0 < \minEdgeLength \leq \edgeLength{\edge} \leq \maxEdgeLength < \infty}$,  ${\forall \edge \in \edgeSet}$. Other results hold for both positive and  {\bf nonnegative} edges (\nonnegS), ${0 \leq \edgeLength{\edge} \leq \maxEdgeLength < \infty}$. 
Yellow indicates new results enabled by our work.

\end{table}
\endgroup

The vast majority of previous work on SSSP has involved graphs with integer edge weights (related work is surveyed in Section~\ref{sec:relatedwork}). 
Sophisticated priority queue data structures have reduced the runtime of integer SSSP algorithms to ${\bigOh{\numEdges + \numNodes\log\numNodes}}$ for the restrictive {\it comparison} computational model and even further for the {\it RAM} computational model that most ``real'' computers use e.g., ${\bigOh{\numEdges + \numNodes\log \log\numNodes}}$. See Table~\ref{table:before} for additional results.

The discrepancy between the state-of-the-art for integer vs.\ floating point SSSP can be seen in Table~\ref{table:before}.
That said, notable floating point results include \cite{Thorup2} and \cite{Thorup4}. \cite{Thorup2} proves that, in the case of {\it undirected} graphs (i.e., USSSP), an algorithm that solves integer USSSP can be used as an oracle to solve floating point USSSP. \cite{Thorup4} leverages the fact that, assuming the IEEE standard for floating point numbers and integers is used, value orderings are preserved if we interpret the bit representations of floating point numbers directly as integers. The latter trick enables many (but not all) results from integer SSSP to be extended to floating point SSSP, assuming IEEE standards are followed. In particular, it enables the direct application of the ${\bigOh{\numEdges + \numNodes\log \log\numNodes}}$ integer SSSP algorithm presented in \cite{Thorup4} to be used to solve floating point SSSP.

From a high-level point-of-view, our work extends the applicability of integer-to-floating-point oracles to a greater subset of SSSP. In particular, it enables any solution for integer SSSP that constructs a full ordering (i.e., by extracting values from an integer priority queue in a monotonically nondecreasing sequence) to solve floating point \posS. The floating point solution happens via a partial ordering on path lengths that can be found by extracting floating point values in a special sequence that may {\it not} be monotonically nondecreasing. 

The trick that we use is fundamentally different than interpreting the bits of floating point numbers directly as integers, and thus applies to a different subset of SSSP algorithms than previous work. 
Many (and arguably most) integer SSSP solutions construct a full ordering based on path-length.
Leveraging this previous work, we immediately get a variety of faster theoretical runtimes for floating point \posS.
For example, combining our work with \cite{Thorup4} yields an algorithm for floating point \posS that runs in time ${\bigOh{\numEdges + \numNodes\log \log \frac{\maxEdgeLength}{\minEdgeLength}}}$, where  $\minEdgeLength$ is the minimum edge length in the graph. This is particularly of note given the comment in \cite{Thorup4} that ``for floating point numbers we [they] do not get bounds in terms of the maximal weight,'' i.e., $\maxEdgeLength$.
Other new results enabled by our work appear highlighted in Table \ref{table:after}.
%
Our works also guarantees  that any new faster results for integer SSSP that come from the discovery of a new monotonic integer priority queue will have corresponding counterparts for floating point \posS.
%

\subsection{High Level Description}

Our method relies on a result known to \cite{dinic.78} and \cite{tsitsiklis.95}; that, under the right conditions, SSSP can be solved using a partial ordering of nodes, despite the fact that full orderings are typically used in practice.
%
%
In particular, any node that has a shortest-path-length (key value) within $\minEdgeLength$ of the queue's minimum key can be extracted (instead of the node with the minimum key) from the priority queue and Dijkstra's algorithm will still run correctly.
We show how an integer priority queue that extracts values in a monotonically nondeceasing order can easily be converted into a floating point priority queue that is suitable for solving floating point \posS. This is acomplished  by calculating  {\it integer keys} for {\it floating point values} by dividing the latter by $\minEdgeLength$ and then converting with truncation to an integer representation. The resulting floating point priority is no longer monotonic, but rather $\minEdgeLength$-nonmonotonic (see Section~\ref{sec:monotone} for a discussion on the differences between $\minEdgeLength$-nonmonotonic and nonmonotonic). While \cite{dinic.78} uses a similar idea to improve integer bucket-based priority queues for solving integer SSSP, we are the first to employ this mechanism to leverage integer priority queues for solving floating point SSSP.

Algorithmically this means that we store two keys with each node instead of one. The actual floating point key $\guess(\node)$ that reflects the current best guess of node $\node$'s shortest path length, as well as a companion integer key $\key(\node) = \lfloor\frac{\guess(\node)}{\minEdgeLength} \rfloor$.  We maintain the position of node $\node$ within the integer queue using $\key(\node)$ instead of $\guess(\node)$. Whenever $\guess(\node)$  is decreased we recalculate $\key(\node)$ from the new $\guess(\node)$ and then update $\node$'s position in the integer queue using $\key(\node)$. The differences between standard Dijkstra's algorithm and a version using our modification to solve floating point \posS can be seen by comparing Algorithms~\ref{alg0}~and~\ref{alg1}.

Although there is loss of precision within the integer priority queue (due to conversion from floating point to integer with truncation) any resulting node reordering is upper-bounded by $\minEdgeLength$ with respect to $\guess(\node)$. Thus, as long as ${\minEdgeLength > 0}$, any reordering of nodes that results due to this precision loss is still sufficient to solve floating point SSSP with Dijkstra's algorithm (i.e., the resulting $\minEdgeLength$-nonmonotonic extraction order from the resulting floating point queue creates a partial ordering sufficient for Dijkstra's algorithm to solve SSSP).
The rest of this paper is dedicated to formalizing these results and discussing them in greater detail.

\section{Preliminaries} \label{sec:prelim}


A graph $\graph$ (either directed or underacted) is defined by its edge set $\edgeSet$ and vertex set $\vertexSet$. We assume that both ${\numEdges = |\edgeSet|}$ and ${\numNodes = |\vertexSet|}$ are finite. 
Each edge $\edge_{ij} = (\node_i,\node_j)$ between two vertices $\node_i$ and $\node_j$ (or from $\node_i$ to $\node_j$ if the edge is directed) is assumed to have a predefined length (or edge-length or cost) $\edgeLength{\edge_{ij}} = \edgeLength{(\node_i,\node_j)}$ such that ${0 \leq \edgeLength{\edge_{ij}} \leq \infty}$ for all $\edge_{ij}\in \edgeSet$. 


A path ${\pathSequence(\node_i,\node_j)}$ is an ordered sequence of edges ${\edge_1,\ldots, \edge_\ell}$ such that ${\edge_1 = (\node_i,\node_1)}$ and ${\edge_k = (\node_{k-1}, \node_{k})}$ for all ${k \in \{2, \ldots, \ell-1\}}$ and ${\edge_\ell = (\node_{\ell-1},\node_j)}$ and where ${\{\edge_k \in \pathSequence(\node_i,\node_j)\} \subset \vertexSet}$. The shortest path $\pathSequence^*(\node_i,\node_j)$ is the shortest possible path from $\node_i$ to $\node_j$. Formally,
$${
\pathSequence^*(\node_i,\node_j) \equiv \argmin_{\pathSequence(\node_i,\node_j)} \sum_{\edge \in \pathSequence(\node_i,\node_j)}  \edgeLength{\edge}
}$$
We are interested in paths from a particular ``start-node'' $\startNode$ to nodes $\node$, and define $\actual(\node)$ to be the length of the shortest possible path from $\startNode$ to $\node$.
$${
\actual(\node) \equiv \min_{\pathSequence(\startNode,\node)} \sum_{\edge \in \pathSequence(\startNode,\node)}  \edgeLength{\edge}
}$$

\begin{figure}[t!]
    \begin{algorithm}[H] \label{alg0}
      \dontprintsemicolon
      \caption{$\mathtt{Dijkstra}(\graph, \startNode)$  standard implementation} 
      \KwIn{A graph $\graph = (\edgeSet, \vertexSet)$ of node set ${\vertexSet}$ and edge set $\edgeSet$, a start node $\startNode \in \vertexSet$, and a priority queue $\heap$.}
      \KwOut{Shortest path lengths $\actual(\node)$ and parent pointers $\parent(\node)$ with respect to the shortest path-tree $\shortestPathTree$ for all $\node \in \vertexSet$.} 

        $\heap = \emptyset$ \;                          \label{alg0:a}
        \For{$\mathbf{all}$ $\node \in \vertexSet$}
        {                                                \label{alg0:b}
          $\guess(\node) = \infty$ \;                    
          $\parent(\node) = \mathrm{NULL}$ ;   \hspace{3.8cm} /* $\shortestPathTree = \emptyset$ */ \;                    \label{alg0:c}
        }
        $\guess(\startNode) = 0$ \;                       \label{alg0:d}
        $\node = \startNode$ \;                            \label{alg0:e}
        \While{$\node \neq \mathrm{NULL}$} 
        {                                                  \label{alg0:f}
          $\actual(\node) = \guess(\node)$ ;  \hspace{4cm} /* $\shortestPathTree = \shortestPathTree \cup \{\node\}$ */ \;    \label{alg0:g}
          \For{$\mathbf{all}$ $u$ $\mathrm{s.t.}$ $(v,u) \in \edgeSet$}
          {                                                   \label{alg0:h}
            \If{$\actual(\node) + \edgeLength{(v,u)} < \guess(u)$}
            {                                                   \label{alg0:i}
              $\guess(u) = \actual(\node) + \edgeLength{(v,u)}$ \;       \label{alg0:j}
              $\parent(u) = \node$ \; 
              $\updateValue{\heap}{u,\guess(u)}$ \;                \label{alg0:k}
            }
          }
          $\node = \extractMin{\heap}$ \;                     \label{alg0:l}
        }         
    \end{algorithm}
\end{figure}

\begin{figure}[t!]

    \begin{algorithm}[H] \label{alg1}
      \dontprintsemicolon
      \caption{$\mathtt{Dijkstra}(\graph, \startNode)$  our modification that solves {\it floating point} \posS using an {\it integer} priority queue.} 
      \KwIn{A graph $\graph = (\edgeSet, \vertexSet)$ of node set ${\vertexSet}$ and edge set $\edgeSet$, a start node $\startNode \in \vertexSet$, and a monotone {\it integer} priority queue $\heap$.}
      \KwOut{Shortest path lengths $\actual(\node)$ and parent pointers $\parent(\node)$ with respect to the shortest path-tree $\shortestPathTree$ for all $\node \in \vertexSet$.} 

        $\heap = \emptyset$ \;                        
        \For{$\mathbf{all}$ $\node \in \vertexSet$}
        {                                         
          $\guess(\node) = \infty$ \;                    
          $\parent(\node) = \mathrm{NULL}$ ;   \hspace{3.8cm} /* $\shortestPathTree = \emptyset$ */ \;                    \label{alg0:c}
        }
        $\minEdgeLength = \infty $\;                 \label{alg1:1}
        \For{$\mathbf{all}$ $(v,u) \in \edgeSet$}
        {                                               
          \If{$\edgeLength{(v,u)} < \minEdgeLength$}
          {
            $\minEdgeLength = \edgeLength{(v,u)}$\;     \label{alg1:2}
          }
        }
        $\guess(\startNode) = 0$ \;                  
        $\node = \startNode$ \;                         
        \While{$\node \neq \mathrm{NULL}$} 
        {                                                
          $\actual(\node) = \guess(\node)$ ;  \hspace{4cm} /* $\shortestPathTree = \shortestPathTree \cup \{\node\}$ */ \;    
          \For{$\mathbf{all}$ $u$ $\mathrm{s.t.}$ $(v,u) \in \edgeSet$}
          {                                                
            \If{$\actual(\node) + \edgeLength{(v,u)} < \guess(u)$}
            {                                                
              $\guess(u) = \actual(\node) + \edgeLength{(v,u)}$ \;    
              $\key(\node) = \lfloor\frac{\guess(\node)}{\minEdgeLength} \rfloor$  \hspace{2.7cm} /* ${\lfloor  \cdot \rfloor}$ converts to integer with truncation */ \;  \label{alg1:3}
              $\parent(u) = \node$ \; 
              $\updateValue{\heap}{u,\key(\node)}$ \;               
            }
          }
          $\node = \extractMin{\heap}$ \;                  
        }         
    \end{algorithm}
\end{figure}

Dijkstra's algorithm works by incrementally building a ``shortest-path-tree'' $\shortestPathTree$ outward from $\startNode$, one node at a time. Dijkstra's algorithm appears in Algorithm~\ref{alg0}. Each node that is not yet part of the growing $\shortestPathTree$ refines a ``best-guess'' $\guess(\node)$ of its actual shortest-path-length $\actual(\node)$, with the restriction that ${\actual(\node) \leq \guess(\node)}$.
Dijkstra's algorithm guarantees/requires that ${\actual(\node) = \guess(\node)}$ for the node in ${\vertexSet \setminus \shortestPathTree}$ with minimum $\guess(\node)$. In modern versions of the algorithm, a min-priority-heap $\heap$ (or related data structure) is used to keep track of $\guess(\node)$ values.

The min-priority-heap $\heap$ is initialized to empty, 
best-guesses $\guess(\node)$ are initialized to $\infty$, 
parent pointers $\parent(\node)$ with respect to the shortest path tree $\shortestPathTree$ are initialized to $\mathrm{NULL}$,
and the start node $\startNode$ is given an actual distance of $0$ from itself, lines \ref{alg0:a}-\ref{alg0:d}, respectively.

Each iteration involves \mbox{``processing''} the node ${\node \in \vertexSet \setminus \shortestPathTree}$ with minimum $\guess(\node)$ lines \ref{alg0:g}-\ref{alg0:k} (where $\node$ is removed, for all practical purposes, from future consideration during the algorithm's execution). Such a node $\node$ is extracted from the heap on line~\ref{alg0:l} (in the first iteration we know to use $\startNode$, line~\ref{alg0:e}). Next, each neighbor $u$ of $\node$ checks if ${\actual(\node) + \edgeLength{(\node,u)} < \guess(u)}$, that is, if the distance from $\startNode$ through the shortest-path-tree to $\node$  plus the distance from $\node$ to $u$ through edge ${(\node, u) \in \edgeSet}$  is less than $u$'s current best-guess. If so, then $u$ updates its best-guess and parent pointer to reflect the better path via $\node$,  lines \ref{alg0:i}-\ref{alg0:k}. All neighbors $u$ of $\node$ perform the update ${\guess(u) = \min(\guess(u), \actual(\node) + \edgeLength{(\node, u)})}$. The heap is adjusted to account for changing $\guess(u)$ on line \ref{alg0:k}.

Our modification that enables any integer based solution (i.e., a monotone integer priority queue) to be used for a floating point problem appears in Algorithm~\ref{alg1}. It relies on knowledge of $\minEdgeLength$, the length of shortest edge in $\edgeSet$. 
\begin{equation} \label{eq:min}
\minEdgeLength = \min_{\edge \in \edgeSet} \edgeLength{\edge}
\end{equation}
If $\minEdgeLength$ is not known a priori then it can be obtained by scanning all edges in the graph in time $\bigOh{\numEdges}$, lines~\ref{alg1:1}-\ref{alg1:2}.
We store two keys with each node instead of one. The actual floating point key $\guess(\node)$ that reflects the current best guess of node $\node$'s shortest path length, as well as a companion integer key $\key(\node) = \lfloor\frac{\guess(\node)}{\minEdgeLength} \rfloor$, line~\ref{alg1:3}, that is used to update the position of $\node$ within the integer priority queue.

A number of the previous (integer-based) algorithms that are compatible with our method have runtime dependent on the maximum weight $\maxEdgeLength$ 
%
%
%
and $\wordLen$, the word size of the computer being used (currently 32 or 64 in most computers). 
The effects of $\wordLen$ and $\maxEdgeLength$ on runtime are buried in the heap operations, and dependent on the particular algorithm being used. If a value of $\maxEdgeLength$ is explicitly needed by a particular priority queue then it can be found by scanning edges at the beginning of the algorithm, similarly to $\minEdgeLength$.

\subsection{Problem Variants} \label{sec:problem}

\subsubsection{SSSP} 

The single source shortest path planning problem is defined:

{\it 
Given $\graph = (\vertexSet, \edgeSet)$ and a particular node ${\startNode  \in \vertexSet}$, then for all ${\node \in \vertexSet}$, find the shortest path $\pathSequence^*(\startNode, \node)$. }\\

The problems is considered solved once we have produced a data structure containing both:
\begin{enumerate} 
\item The shortest-path lengths $\actual(\node)$ for all $\node \in \vertexSet$ from $s$.
\item The shortest path tree that can be used to extract the shortest path from $s$ to any $\node$ (at least for any $\node$ such that $\actual(\node) < \infty)$.
\end{enumerate}
The latter can be accomplished by storing the parent of each node with respect to $\shortestPathTree$, allowing each shortest path to be extracted by following back pointers in the fashion of gradient descent from $v$ to $\startNode$ and then reversing the result.

The reverse (i.e., sink) search that involves finding all paths to $\startNode$ (instead of from $\startNode$) can be solved using basically the same algorithm except that the rolls played by in- and out- neighbors are swapped and the extracted path is not reversed. 

For the sake of clarity and brevity in the rest of our presentation, we now formally define abbreviations for a number of variations on SSSP. Variations are related to the range of allowable edge weights (e.g., whether or not zero-length edges allowed), and if edges are directed or undirected.

\subsubsection{\nonnegS}

``Nonnegative SSSP'' refers to the special case of SSSP such that edges have nonnegative weights, $0 \leq \edgeLength{\edge}$ for all $\edge \in \edgeSet$.

\subsubsection{\posS}

``Positive SSSP'' refers to the special case of SSSP such that edges have positive weights, $0 < \minEdgeLength \leq \edgeLength{\edge}$ for all $\edge \in \edgeSet$.

\subsubsection{\nonnegBoundedS}

``Bounded Nonnegative SSSP'' refers to the special case of \nonnegS such that edge weights are bounded above, ${0 \leq \edgeLength{\edge} \leq \maxEdgeLength < \infty}$ for all $\edge \in \edgeSet$.

\subsubsection{\posBoundedS}

``Bounded Positive SSSP'' refers to the special case of \posS such that edge weights are bounded above, ${0 < \edgeLength{\edge} \leq \maxEdgeLength < \infty}$ for all $\edge \in \edgeSet$.

\subsubsection{USSSP} 

The single source shortest path planning problem for undirected graphs is defined:

{\it 
Given $\graph = (\vertexSet, \edgeSet)$ such that for all $(u,v) \in \edgeSet$ there exists $(v,u) \in \edgeSet$ such that $\edgeLength{(u,v)} = \edgeLength{(v,u)}$, and a particular node ${\startNode  \in \vertexSet}$, then for all ${\node \in \vertexSet}$, find the shortest path $\pathSequence^*(\startNode, \node)$. }

The prefixes N, P, BN, and BP can also be used with USSSP, e.g., P-USSSP is ``Positive USSSP.''

\subsubsection{Floating point vs.\ integer}

We explicitly differentiate between variants of problems that use either integer or floating point edge weights. For example, integer SSSP is the subset of SSSP that uses integer weights; floating point \posBoundedS is the subset of \posBoundedS that uses floating point weights; etc.

\subsection{Orderings} \label{sec:orderings}

Dijkstra's original algorithm is provably correct (see \cite{Dijkstra}), based on guarantees that the next node $v$ processed at any step has the following properties:
\begin{enumerate}
\item $\node  \in \vertexSet \setminus \shortestPathTree$.
\item Either $\node = \startNode$ or $\node$ has some neighbor $u$ such that $u \in \shortestPathTree$.
\item $\guess(\node) \leq \guess(v')$, for all nodes $v' \in \vertexSet \setminus \shortestPathTree$.
\end{enumerate}
As has often been remarked, it works by finding a full ordering on $\actual(\node)$ for all ${v \in \vertexSet}$. The priority heap data structure enforces (3) and determines this ordering. 

However, a partial ordering based on $\minEdgeLength$ is sufficient to solve \posS; this is formally proven in Section~\ref{sec:po} and has previously been shown by \cite{tsitsiklis.95} and, according to \cite{Thorup}, even earlier in Russian by \cite{dinic.78}. We make extensive use of the latter partial ordering in the current paper.

Let ${v_1, \ldots, v_n}$ denote the order in which Dijkstra's algorithm processes nodes.

\subsubsection{\fullOrdering (Full Ordering)}

A full ordering of nodes based on path-length level-sets is a sequence ${v_1, \ldots, v_n}$ such that: 
$$
i < j \Rightarrow \actual(\node_i) \leq \actual(\node_j).
$$
in other words, the fact that the $j$-th node is extracted after the $i$-th node implies that the shortest path from the $j$-th node is at least as long as the one from the $i$-th node.

\subsubsection{\partialOrdering (Partial Ordering Based on $\minEdgeLength$)}

A partial ordering of nodes based on $\minEdgeLength$ values is a sequence ${v_1, \ldots, v_n}$ such that:
$$
i < j \Rightarrow \actual(\node_i) < \actual(\node_j) + \minEdgeLength
$$
where $\minEdgeLength$ is the minimum edge weight in the graph.
Note that \partialOrdering is similar to \fullOrdering except that nodes $\node$ and $u$ can be swapped if $\actual(\node)$ and  $\actual(u)$ are within $\minEdgeLength$.
As we will show shortly, \partialOrdering is sufficient for solving \posS.

\subsection{Monotonicity} \label{sec:monotone}

A {\it monotone priority queue} is, unfortunately, given a variety of definitions in the literature. According to \cite{Raman} ``A monotone priority queue has the property that the value of the minimum key stored in the priority queue is a don-decreasing function of time.'' 
\cite{Cherkassky.etal} alternatively state that ``In monotone priority queues the extracted keys form a monotone, nondecreasing sequence.''

While all heaps that meet the second definition necessarily meet the first; the reverse is not true, in general.
The difference between these two definitions is particularly important for our work. 
The queues we use to create \partialOrdering have the property that the minimum value {\it stored} in the priority queue is monotonically nondecreasing but do not, in general, {\it extract} nodes from the priority queue in a monotonically nondecreasing order. Moreover, they do not even guarantee that the node with the minimum value is extracted; but instead guarantee only that the node extracted from the heap has a key that is within $\minEdgeLength$ of the minimum.
We will use the term ``$\minEdgeLength$-nonmonotonic'' to describe the floating point heap we use/require, in order to highlight this important property.

We note that the class of $\minEdgeLength$-nonmonotonic priority queues are a {\it relaxed} version of ``monotone priority queues'' (where we use quotation marks to denote that either the first or second definition may be used). Thus our special $\minEdgeLength$-nonmonotonic priority queues can be assumed to require no more runtime or space than ``monotone priority queues''  and may even require less.
The special (useful) case of $\minEdgeLength$-nonmonotonicity should not be confused with standard nonmonotonicity.  A nonmonotonic priority queue always extracts the minimum value and makes no assumptions on the order, partial or otherwise, that key values are added to and/or extracted from the queue (since less assumptions can be made on the values stored in the queue, ``nonmonotonic priority queues'' require more runtime and space than ``monotone priority queues''; this is the opposite of our special case $\minEdgeLength$-nonmonotonicity).

\subsection{Computational Model}

We assume the existence of an algorithmic solution to integer \nonnegS that creates a \fullOrdering using a monotone integer priority queue. As such, we inherit whatever computational model is assumed by this underlying integer-based algorithm. 
In general, basic arithmetic and related operations on $\wordLen$-length words are assumed to require $\bigOh{1}$ time; where ${\wordLen = \bigOh{\log \maxEdgeLength}}$ so that extra computations cannot be ``hidden'' by performing them in parallel on unnecessarily ample numerical representations.
Previous  methods with runtimes dependent on $\maxEdgeLength$ assume that ${\wordLen \geq \log \maxEdgeLength}$ (i.e., so that it is possible to efficiently represent integers and floating point numbers in memory), and thus ${\wordLen = \bigTheta{\log \maxEdgeLength}}$.

We additionally assume that the computational model supports floating point numbers and that floating-point-to-integer conversion (with truncation) is supported in $\bigOh{1}$ time. Similar to other floating point algorithms, we also assume cumulative errors due to fixed floating point precision are tolerated within the definition of  ``correct solution.'' This is commonly handled by representing floating point numbers at twice the normal precision during mathematical operations within the CPU and then rounding to the nearest represented floating point number for RAM storage.

\section{Sufficiency of \partialOrdering Partial Ordering for Solving \posS} \label{sec:po}

We now prove that Dijkstra's algorithm will correctly solve (integer and floating point) \posS if nodes are processed according to \partialOrdering, the partial ordering based on $\minEdgeLength$ as defined in Section~\ref{sec:orderings}. We note that \cite{tsitsiklis.95} presents an alternative proof of the same idea.

Recall that ``processing'' $\node_i$ is the act of removing $\node_i$ from the heap and updating its neighbors ${\node \in \{\node_j \,\, | \,\, (\node_i,\node_j) \in \edgeSet \}}$ with respect to any path-length decreases that can be achieved via edges from $\node_i$.

\begin{lemma} \label{lem:po}
Assuming \partialOrdering exists, and nodes are processed according to \partialOrdering with Dijkstra's algorithm, and ${\actual(\node_i) = \guess(\node_i)}$ when $\node_i$ is processed for ${i = 1, \ldots, j-1}$, then ${\actual(\node_j) = \guess(\node_j)}$.
\end{lemma}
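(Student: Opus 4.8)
The plan is to argue by contradiction: suppose $\actual(\node_j) < \guess(\node_j)$ at the moment $\node_j$ is processed. Since $\guess(\node_j)$ is always an upper bound on $\actual(\node_j)$ and is the length of some actual path discovered so far, the strict inequality means the true shortest path $\pathSequence^*(\startNode,\node_j)$ has not yet been ``communicated'' to $\node_j$ through the relaxation steps. First I would look at that shortest path and walk along it from $\startNode$ toward $\node_j$, identifying the last vertex $\node_k$ on the path that has already been processed (i.e.\ $\node_k \in \{\node_1,\ldots,\node_{j-1}\}$, which is nonempty because $\startNode = \node_1$ lies on the path and is processed first), and let $u$ be the vertex immediately following $\node_k$ on the path. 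By the inductive hypothesis $\actual(\node_k) = \guess(\node_k)$ when $\node_k$ was processed, so when $\node_k$ was processed it relaxed the edge $(\node_k,u)$, giving $\guess(u) \le \actual(\node_k) + \edgeLength{(\node_k,u)}$ from that point onward. Because $(\node_k,u)$ is an edge on a shortest path to $\node_j$, this right-hand side equals $\actual(u)$, so in fact $\guess(u) = \actual(u)$ by then and stays that way.

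Next I would combine this with the fact that $u$ lies on a shortest path to $\node_j$ and edge weights are positive: $\actual(u) = \actual(\node_j) - (\text{length of the } u\text{-to-}\node_j \text{ suffix}) \le \actual(\node_j) - \minEdgeLength$, using that the suffix is nonempty (if $u = \node_j$ we are already done, since then $\guess(\node_j) = \actual(\node_j)$, contradiction) and that every edge on it has length at least $\minEdgeLength$. Here is where \partialOrdering enters: the defining property $i < j \Rightarrow \actual(\node_i) < \actual(\node_j) + \minEdgeLength$ applied in contrapositive form tells us that any node with $\actual$-value $\le \actual(\node_j) - \minEdgeLength$ must already have been processed before $\node_j$. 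So $u$ has already been processed, i.e.\ $u = \node_i$ for some $i < j$. But $u$ was chosen as the successor of the last processed vertex on the path, so $u$ was \emph{not} supposed to be processed — contradiction. Hence $\actual(\node_j) = \guess(\node_j)$.

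The main obstacle I expect is being careful about \emph{timing}: $\guess$ values change over the run, so I must phrase everything in terms of ``at the time $\node_j$ is extracted'' and track that once $\guess(u)$ reaches $\actual(u)$ it never increases (relaxations only decrease $\guess$), and that $\node_k$'s relaxation of $(\node_k,u)$ happened at some earlier step and so has taken effect by the time $\node_j$ is processed. A secondary subtlety is the degenerate case $\actual(\node_j) = \infty$ (no path), which should be handled separately or ruled out at the start — if $\guess(\node_j) = \infty$ when extracted there is nothing to prove, and if $\guess(\node_j) < \infty$ then a finite path exists and the argument above applies with $\actual(\node_j)$ finite. I would also note explicitly where $\minEdgeLength > 0$ is used (the suffix-length bound $\ge \minEdgeLength$), since this is exactly the hypothesis that fails for \nonnegS and is the crux of why the lemma is restricted to \posS.
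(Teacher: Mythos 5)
Your proof is correct, but it takes a somewhat different route from the paper's. The paper argues more directly: it picks the unique predecessor $\node_k$ of $\node_j$ on the shortest path $\pathSequence^*(\startNode,\node_j)$ (so $\actual(\node_k)+\edgeLength{\edge_{kj}}=\actual(\node_j)$), shows ${\actual(\node_k)+\minEdgeLength\leq\actual(\node_j)}$, then argues that $k<j$ is impossible because the inductive hypothesis would have forced $\guess(\node_j)=\actual(\node_j)$ already, and finally notes that $j<k$ contradicts the \partialOrdering property directly. You instead take the last already-processed vertex on the shortest path and its successor $u$, derive $\guess(u)=\actual(u)$, bound $\actual(u)\leq\actual(\node_j)-\minEdgeLength$, and use the contrapositive of \partialOrdering to force $u$ to have been processed already, contradicting your choice. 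Both are valid applications of the same two ingredients (the inductive hypothesis plus the $\minEdgeLength$ slack in the ordering), but the paper's version is shorter because it never needs to ``walk'' along the path or separately handle $u=\node_j$; its only case split is on whether the unique shortest-path predecessor was extracted before or after $\node_j$. Your version is closer to the textbook Dijkstra correctness proof and is more scrupulous about timing (when a relaxation has ``taken effect'') and about the $\actual(\node_j)=\infty$ degenerate case, neither of which the paper discusses explicitly. One small imprecision worth tightening: the contrapositive you invoke yields only that $u=\node_r$ with $r\leq j$, not $r<j$; you need to combine it with your separately established $u\neq\node_j$ to get strict inequality, so state that explicitly rather than asserting $u$ ``must already have been processed before $\node_j$'' directly from the contrapositive alone.
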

\begin{proof}
(by contradiction) Assume instead $\actual(\node_j) < \guess(\node_j)$.
Therefore, there must exist some $\node_k$ and edge $\edge_{kj}$ such that ${\actual(\node_k) + \edgeLength{\edge_{kj}} = \actual(\node_j) < \guess(\node_j)}$. This has two ramifications, first:
\begin{equation} \label{eq:a}
{\actual(\node_k) + \minEdgeLength \leq \actual(\node_j)}
\end{equation}
because $\minEdgeLength \leq \edgeLength{\edge_{kj}}$ by definition; and second:
$$
j<k
$$
(otherwise, if $k<j$, then we would have already set ${\guess(\node_j) = \actual(\node_j) = \actual(\node_k) + \edgeLength{\edge_{kj}}}$).
But by the definition of \partialOrdering the fact that $j<k$ implies: 
\begin{equation} \label{eq:b}
\actual(\node_k) + \minEdgeLength > \actual(\node_j)
\end{equation} 
yielding the necessary contradiction. 
\qed
\end{proof}

Now we are able to prove the correctness of Dijkstra's algorithm using \partialOrdering by showing that all nodes $\node$ are processed from the \partialOrdering heap only if ${\actual(\node) = \guess(\node)}$, i.e., the shortest path from $\node$ to $\startNode$ has already been computed.  

\begin{theorem} \label{th:correct}
Assuming a \partialOrdering exists, and nodes are processed according to \partialOrdering, then Dijkstra's algorithm correctly sets ${\actual(\node) = \guess(\node)}$ for all ${\node \in \vertexSet}$. 
\end{theorem}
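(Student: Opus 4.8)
The plan is to prove the theorem by induction on the position of a node in the processing sequence $\node_1, \ldots, \node_n$, with Lemma~\ref{lem:po} serving as the inductive step. Lemma~\ref{lem:po} already supplies exactly the implication needed: if the first $j-1$ processed nodes were resolved correctly, i.e. $\actual(\node_i) = \guess(\node_i)$ for $i = 1, \ldots, j-1$, then $\actual(\node_j) = \guess(\node_j)$ as well. So all that remains is to establish a base case and to chain the lemma.

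For the base case, the first node processed is $\startNode$ (line~\ref{alg0:e}), and $\guess(\startNode) = 0$ was set on line~\ref{alg0:d}. In the \posS setting every edge has length at least $\minEdgeLength > 0$, so every non-empty path --- in particular any cycle through $\startNode$ --- has length strictly greater than $0$; hence the empty path is optimal and $\actual(\startNode) = 0 = \guess(\startNode)$, giving $\actual(\node_1) = \guess(\node_1)$. (The same conclusion holds in the integer \nonnegS case to which our construction ultimately reduces, since there all path lengths are nonnegative.)

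The inductive step is then immediate: assuming $\actual(\node_i) = \guess(\node_i)$ for all $i < j$, Lemma~\ref{lem:po} gives $\actual(\node_j) = \guess(\node_j)$. By induction this holds for every node that is ever extracted from $\heap$. The only loose end is the set of nodes the while loop never processes; a node fails to be processed precisely when it is never inserted into $\heap$, which happens only if its $\guess$-value is never lowered below $\infty$, and this in turn means no reachable node has an edge into it, so it is unreachable from $\startNode$ and $\actual(\node) = \infty = \guess(\node)$. Combining the processed prefix with these never-processed nodes yields $\actual(\node) = \guess(\node)$ for all $\node \in \vertexSet$.

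I do not expect a genuine obstacle in this proof: the substance is carried by Lemma~\ref{lem:po}, whose contradiction argument exploits that the $\minEdgeLength$ slack in the definition of \partialOrdering is exactly small enough that a node cannot be finalized before a predecessor on one of its shortest paths. The two points that merely need care are (i) stating the base case so that it simultaneously covers floating point \posS and the integer case underneath, and (ii) being explicit that unreachable nodes --- whose shared value $\infty$ does not even fit a literal \partialOrdering sequence, since $\infty < \infty + \minEdgeLength$ fails --- are dealt with separately and trivially rather than via the induction. Accordingly I would run the induction over just the processed prefix of the sequence and dispose of the remaining nodes in one closing sentence.
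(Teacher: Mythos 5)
Your proof is correct and is essentially the same as the paper's: induction over the processing order with $\node_1 = \startNode$ as the base case ($\actual(\startNode) = \guess(\startNode) = 0$) and Lemma~\ref{lem:po} as the inductive step. You are a bit more careful than the paper's one-line argument --- you justify $\actual(\startNode)=0$ via positivity and you separately dispose of never-processed (unreachable) nodes, both of which are sensible additions the paper glosses over --- but the structure and the load-bearing step are identical.
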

\begin{proof}
The proof is by induction on $j$. The inductive step relies on Lemma~\ref{lem:po} for ${j = n, \ldots, 2}$. The base case for ${j = 1}$ is given by the fact that $\node_1 = \startNode$ and $\actual(\startNode) = \guess(\startNode) = 0$ before the algorithm starts.
\qed

\end{proof}

\section{Any Heap that Constructs an FO for Integer \nonnegS is an Oracle for \partialOrdering for Floating Point \posS } \label{sec:heaps_orical}

We now show how any monotone integer heap $\heap$ that produces a FO when given integer weights for integer \posS can be used as an oracle to create a $\minEdgeLength$-nonmonotonic floating point heap that produces a \partialOrdering for floating point \posS. Recall that (for floating point SSSP) the floating point key for a node $\node$ is given by $\guess(\node)$. 
We define a companion ``{\it integer} key'' $\key(\node) = \lfloor\frac{\guess(\node)}{\minEdgeLength} \rfloor$, and maintain the position  $\node$ within $\heap$ using $\key(\node)$ instead of $\guess(\node)$. In other words, whenever $\guess(\node)$  is decreased we also recalculate  $\key(\node)$ and then update $\node$ in $\heap$ using $\key(\node)$. 

By definition of the conversion with truncation operator ${\lfloor \cdot\rfloor}$  and the fact that ${\minEdgeLength > 0}$ we are guaranteed that ${\guess(\node) - \minEdgeLength\key(\node) < \minEdgeLength}$ for all nodes at any point during Dijkstra's execution. Assuming that ${i < j}$, then at the instant $\node_j$ is processed there are two cases:
$${\key(\node_j) - \key(\node_i) = 0}$$
and
$${\key(\node_j) - \key(\node_i) \geq 1}.$$
In the first case, ${\key(\node_j) - \key(\node_i) = 0}$ implies that ${|\guess(\node_i) - \guess(\node_j)| < \minEdgeLength}$ and so either ${\guess(\node_i) < \guess(\node_j) + \minEdgeLength}$, in the case that $\guess(\node_i) > \guess(\node_j)$, or ${\guess(\node_i) \leq \guess(\node_j)}$ otherwise --- and so ${\guess(\node_i) < \guess(\node_j) + \minEdgeLength}$ trivially.
In the second case ${\key(\node_j) - \key(\node_i) \geq 1}$ implies that ${\guess(\node_i) < \guess(\node_j)}$ and so $\guess(\node_i) < \guess(\node_j) + \minEdgeLength$ trivially. 
This discussion yields the following theorem:

\begin{theorem} \label{th:po}
Assuming a heap $\heap$ exists that creates a FO when used by Dijkstra's algorithm on integer \posS, then that heap can be used to create a \partialOrdering partial ordering for floating point \posS. 
\end{theorem}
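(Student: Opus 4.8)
The plan is to prove Theorem~\ref{th:po} by showing that the sequence $\node_1,\ldots,\node_{\numNodes}$ in which Algorithm~\ref{alg1} processes nodes is itself a \partialOrdering; Theorem~\ref{th:correct} then gives correctness of Algorithm~\ref{alg1} as a free byproduct. Throughout, write $g_i$ and $\kappa_i=\lfloor g_i/\minEdgeLength\rfloor$ for the values of $\guess(\node_i)$ and $\key(\node_i)$ at the instant $\node_i$ is extracted (both are frozen thereafter), and recall the standard relaxation invariant $\guess(\node)\ge\actual(\node)$, which holds at all times because every relaxation writes $\guess(u)\leftarrow\guess(\node)+\edgeLength{(\node,u)}\ge\actual(\node)+\edgeLength{(\node,u)}\ge\actual(u)$. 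The argument has three parts.

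First I would check that $\heap$ is used within its contract, so that it genuinely extracts keys in nondecreasing order, $\kappa_1\le\kappa_2\le\cdots\le\kappa_{\numNodes}$. At the moment $\node_j$ is extracted it holds the minimum key $\kappa_j$, so every key already in $\heap$ is $\ge\kappa_j$; and each key written afterwards during the relaxation of an out-neighbour $u$ of $\node_j$ satisfies $\guess(u)=g_j+\edgeLength{(\node_j,u)}\ge g_j+\minEdgeLength$, hence $\key(u)\ge\kappa_j+1$. Thus no key below $\kappa_j$ is ever inserted after $\kappa_j$ has been extracted, which is exactly the monotonicity precondition a monotone integer queue requires, so the nondecreasing extraction order holds. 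Combining this with the two-case truncation computation stated just above the theorem yields the intermediate fact $(\star)$: for $i<j$, $g_i<g_j+\minEdgeLength$.

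Second — and this is the crux — I would prove by strong induction on $j$ that $g_j=\actual(\node_j)$, i.e.\ that Algorithm~\ref{alg1} records the true shortest-path length at every extraction. The subtlety is an apparent circularity: $(\star)$ only turns into the \partialOrdering inequality $\actual(\node_i)<\actual(\node_j)+\minEdgeLength$ after we know $g=\actual$ at extraction, whereas Lemma~\ref{lem:po} and Theorem~\ref{th:correct} derive $g=\actual$ only under the hypothesis that nodes are \emph{already} processed in a \partialOrdering. I would break this by adapting the contradiction argument of Lemma~\ref{lem:po}, replacing its appeal to the \partialOrdering hypothesis by a direct use of the heap's min-extraction property. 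Suppose $g_j>\actual(\node_j)$ (the invariant rules out $g_j<\actual(\node_j)$). Fix a shortest path $\startNode=w_0\to\cdots\to w_\ell=\node_j$, and let $w_t$ be the last of its nodes extracted strictly before $\node_j$ — this exists since $w_0=\startNode=\node_1$, and $t\le\ell-1$. By the induction hypothesis $\guess(w_t)=\actual(w_t)$ when $w_t$ was processed, so its relaxation of $w_{t+1}$ forced $\guess(w_{t+1})=\actual(w_{t+1})$ from then on. If $w_{t+1}=\node_j$ we are done, since then $g_j=\actual(\node_j)$, a contradiction. Otherwise $w_{t+1}$ has been discovered (hence sits in $\heap$) but not yet extracted when $\node_j$ comes out, so $\kappa_j\le\key(w_{t+1})=\lfloor\actual(w_{t+1})/\minEdgeLength\rfloor$, which gives $g_j<\actual(w_{t+1})+\minEdgeLength$; but $w_{t+1}$ lies at least one positive edge short of $\node_j$ along the path, so $\actual(w_{t+1})\le\actual(\node_j)-\minEdgeLength$, forcing $g_j<\actual(\node_j)$ and contradicting the invariant. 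Either way $g_j=\actual(\node_j)$.

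Finally, I would assemble the two parts: for $i<j$ we get $\actual(\node_i)=g_i<g_j+\minEdgeLength=\actual(\node_j)+\minEdgeLength$, so $\node_1,\ldots,\node_{\numNodes}$ meets the definition of \partialOrdering, which is exactly Theorem~\ref{th:po}. I expect Part~2 — closing the induction $g_j=\actual(\node_j)$ despite the circularity — to be the main obstacle; Parts~1 and~3 are bookkeeping with the floor function and the $\guess\ge\actual$ invariant. Along the way one should also dispatch the degenerate cases ($\node_j=\startNode$ for $j\ge 2$, nodes not yet discovered, ties $\kappa_i=\kappa_{i+1}$) and confirm that the notion of a \emph{monotone integer priority queue} assumed in Section~\ref{sec:monotone} is the nondecreasing-extraction version, since Part~1 relies on precisely that guarantee.
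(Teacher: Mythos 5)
Your proof is correct and, in fact, repairs a genuine gap in the paper's own argument. The paper's proof of Theorem~\ref{th:po} consists of the two-case truncation analysis preceding the theorem (which yields $\guess(\node_i) < \guess(\node_j) + \minEdgeLength$ for $i<j$), followed by the one-line claim that ``by induction\ldots nodes are processed such that $\actual(\node_i) < \actual(\node_j) + \minEdgeLength$,'' followed by an appeal to Theorem~\ref{th:correct}. You correctly notice that passing from the $\guess$-inequality to the $\actual$-inequality requires $\guess(\node_i)=\actual(\node_i)$ at extraction, which is precisely the conclusion of Theorem~\ref{th:correct} --- but Theorem~\ref{th:correct} delivers that conclusion only once a \partialOrdering is already in hand, which is the very thing being proved. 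Your Part~2 is the missing piece: a self-contained strong induction that establishes $\guess(\node_j)=\actual(\node_j)$ at extraction using only the relaxation invariant $\guess\geq\actual$ and the heap's min-extraction property (the bound $\key(\node_j)\leq\key(w_{t+1})$ for the discovered-but-unextracted successor on a shortest path). In effect you reprove Lemma~\ref{lem:po} in a form that does not presuppose \partialOrdering, which is what lets the argument bootstrap instead of going in a circle. Your Part~1 (checking that the relaxation pattern respects the monotone-heap contract, so extracted integer keys are nondecreasing) also makes explicit a premise the paper uses without comment. The cost of your route is length, since Theorem~\ref{th:correct} cannot be reused as a black box; the gain is a proof that actually closes. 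The core technical idea --- that dividing by $\minEdgeLength$ and truncating bounds any induced reordering by $\minEdgeLength$ in $\guess$ --- is the same in both.
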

\begin{proof}
By induction, it is easy to see (from the above discussion) that nodes are processed such that ${i < j \Rightarrow \actual(\node_i) <  \actual(\node_j) + \minEdgeLength}$. Thus, nodes are processed in a \partialOrdering. Combining this result with Theorem~\ref{th:correct} completes the proof.
\qed
\end{proof}

In the worst case, all nodes always require a unique integer key. Thus, worst-case runtime for the resulting float \posS algorithm is identical to that for the underlying integer \nonnegS algorithm with respect to $\numNodes$ and $\numEdges$. That said, it is easy to imagine adding additional machinery to some preexisting heaps so that all nodes with the same integer key $\key(\node_i)$ are inserted into a single bucket, and the heap sorts buckets in place of individual nodes --- an idea that may reduce the expected running time in some case, although we do not pursue it further in the current paper.

A notable difference between using an integer solution as an oracle for floating point data vs.\ using that same integer solution on raw integer data is the fact that we calculate integer keys by dividing floating point values by $\maxEdgeLength$. Thus, when used in conjunction with an integer priority queue solution that normally has a runtime dependent on $\maxEdgeLength$ (e.g., bucketing based queues), the $\maxEdgeLength$ used in the raw integer \nonnegS solution must be replaced by $\maxEdgeLength/\minEdgeLength$ in the resulting floating point \posS algorithm. This leads to the following corollary:

\begin{corollary} \label{co:po}
Assuming an integer \posS solution is used as an oracle to solve a floating point \posS, the latter will have similar runtime vs.\ the former with respect to $\numNodes$ and $\numEdges$, while  $\maxEdgeLength$ will be replaced by $\maxEdgeLength/\minEdgeLength$. 
\end{corollary}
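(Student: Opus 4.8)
The plan is to establish Corollary~\ref{co:po} by a direct accounting of where the parameter $\maxEdgeLength$ enters the runtime of an integer \posS algorithm and tracking how the oracle reduction of Theorem~\ref{th:po} modifies the integer values that are actually fed into the heap. First I would recall that in the reduction the heap never sees the floating point keys $\guess(\node)$ directly; it only ever sees the companion integer keys $\key(\node) = \lfloor \guess(\node)/\minEdgeLength \rfloor$. So the ``effective maximum key'' presented to the integer priority queue is not $\minmaxPathLength$ (the largest finite shortest-path length) but rather $\lfloor \minmaxPathLength/\minEdgeLength \rfloor$, and more crudely it is bounded by the largest integer key ever inserted, which is at most $\lfloor (\numNodes\maxEdgeLength)/\minEdgeLength \rfloor$ since any shortest path uses at most $\numNodes-1$ edges each of length at most $\maxEdgeLength$. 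The key observation to make explicit is that the analogue of ``the integer weights lie in $\{0,\dots,\maxEdgeLength\}$'' for the transformed problem is ``the integer keys lie in $\{0,\dots,\lceil \maxEdgeLength/\minEdgeLength \rceil\}$ per edge'', i.e. each edge $\edge$ contributes an integer increment $\lfloor \edgeLength{\edge}/\minEdgeLength \rfloor$ bounded by $\maxEdgeLength/\minEdgeLength$; this is exactly the substitution claimed.

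Second I would address the part about $\numNodes$ and $\numEdges$. Here I would invoke Theorem~\ref{th:po} and the discussion immediately preceding it: the reduction stores one extra integer per node and recomputes $\key(\node)$ via a single $\bigOh{1}$ floating-point-to-integer conversion (guaranteed $\bigOh{1}$ by the computational model) each time $\guess(\node)$ decreases, which happens at most once per edge relaxation. Hence the overhead beyond the underlying integer heap operations is $\bigOh{\numEdges + \numNodes}$, which is absorbed into every runtime in the table. The number of \texttt{updateValue} and \texttt{extractMin} calls is unchanged: exactly $\numNodes$ extractions and at most $\numEdges$ updates, just as in Algorithm~\ref{alg0}. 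So with respect to $\numNodes$ and $\numEdges$ the runtime is identical term-for-term.

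Third I would combine these two points: take any integer \posS runtime bound written as a function $f(\numNodes,\numEdges,\maxEdgeLength,\wordLen)$ for an algorithm whose only $\maxEdgeLength$-dependence arises because the heap manipulates integer keys in $\{0,\dots,O(\numNodes\maxEdgeLength)\}$ (equivalently, integer edge weights bounded by $\maxEdgeLength$); under the oracle reduction the corresponding floating point \posS algorithm runs in time $f(\numNodes,\numEdges,\lceil\maxEdgeLength/\minEdgeLength\rceil,\wordLen) + \bigOh{\numEdges+\numNodes}$, and since $f$ already dominates the additive $\bigOh{\numEdges+\numNodes}$ in all cases of interest (every row of Table~\ref{table:after} contains an $\numEdges$ or $\numNodes$ term), this is $f(\numNodes,\numEdges,\lceil\maxEdgeLength/\minEdgeLength\rceil,\wordLen)$ up to constants. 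One bookkeeping detail worth stating is that the word size assumption $\wordLen = \bigTheta{\log\maxEdgeLength}$ must be read, in the transformed setting, as $\wordLen = \bigTheta{\log(\maxEdgeLength/\minEdgeLength)}$, since that is the magnitude of the integers the heap stores; this is consistent with how $\maxEdgeLength/\minEdgeLength$ already appears inside the logarithms in the highlighted rows of Table~\ref{table:after}.

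I expect the main obstacle to be making the phrase ``runtime dependent on $\maxEdgeLength$'' precise enough that the substitution is rigorous rather than a slogan: different integer algorithms expose $\maxEdgeLength$ in different ways (some through per-edge weight bounds, some through the span of the bucket array $\minmaxPathLength \le \numNodes\maxEdgeLength$, some through $\wordLen$), so the careful part is arguing uniformly that in every such mechanism the relevant quantity after the reduction is the ratio $\maxEdgeLength/\minEdgeLength$ rather than $\maxEdgeLength$ itself — which follows because every integer key is a sum of at most $\numNodes-1$ per-edge increments each at most $\lceil\maxEdgeLength/\minEdgeLength\rceil$, exactly mirroring the integer case with $\maxEdgeLength$ replaced by $\lceil\maxEdgeLength/\minEdgeLength\rceil$. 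The rest is routine.
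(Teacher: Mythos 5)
Your proposal is correct and follows essentially the same route as the paper, which derives this corollary in a single prose paragraph from the observation that the oracle only ever sees the scaled keys $\key(\node)=\lfloor\guess(\node)/\minEdgeLength\rfloor$, so every quantity in the integer analysis that was measured in units of $\maxEdgeLength$ becomes $\maxEdgeLength/\minEdgeLength$ (the paper's text actually contains a typo, saying ``dividing by $\maxEdgeLength$'' where it means $\minEdgeLength$, which you silently and correctly fix). Your version is more careful than the paper's about \emph{why} the $\numNodes,\numEdges$ terms are untouched (identical extraction/update counts plus an absorbed $\bigOh{\numEdges+\numNodes}$ conversion overhead) and about the word-size caveat $\wordLen=\bigTheta{\log(\maxEdgeLength/\minEdgeLength)}$; the only small looseness is the phrase that each edge ``contributes an integer increment $\lfloor\edgeLength{\edge}/\minEdgeLength\rfloor$'' --- since the floor is applied to the accumulated sum rather than per edge the actual per-step increment $\key(\node_j)-\key(\node_i)$ can exceed $\lfloor\edgeLength{\edge}/\minEdgeLength\rfloor$ by one, but this additive $+1$ is harmless for the asymptotic substitution being argued.
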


Integer \nonnegS trivially includes all of \posS, since \nonnegS is a super-set of \posS. This leads to two additional corollaries.

\begin{corollary} \label{th:nonneg}
Assuming a heap $\heap$ exists that creates a FO when used by Dijkstra's algorithm on integer \nonnegS, then that heap can be used to create a \partialOrdering partial ordering for floating point \nonnegS. 
\end{corollary}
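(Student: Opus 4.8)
The plan is to obtain Corollary~\ref{th:nonneg} from Theorem~\ref{th:po} by a one-line inclusion of instance classes, and then to deal separately with the appearance of the word ``\nonnegS'' in the conclusion, which is where the only real subtlety lies. First I would note that the hypothesis of the corollary is strictly stronger than that of Theorem~\ref{th:po}: as remarked just above the corollary, the integer \posS instances are precisely the integer \nonnegS instances with ${\minEdgeLength > 0}$, so a heap $\heap$ that produces a \fullOrdering on \emph{every} integer \nonnegS instance a fortiori produces one on every integer \posS instance. Theorem~\ref{th:po} then applies unchanged and says that $\heap$, used with the companion integer key ${\key(\node) = \lfloor \guess(\node)/\minEdgeLength \rfloor}$, produces a \partialOrdering for every floating point \posS instance.

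It remains to interpret the conclusion for floating point \nonnegS, i.e.\ for instances containing an edge of length exactly $0$, where ${\minEdgeLength = 0}$ and the defining inequality ${\actual(\node_i) < \actual(\node_j) + \minEdgeLength}$ of \partialOrdering degenerates to a strict order on the $\actual(\cdot)$ values --- which no ordering of vertices can satisfy once two distinct vertices have equal shortest-path length, as genuinely happens in the presence of zero-length edges. The resolution I would give is the reduction already sketched in Section~\ref{sec:introduction}: repeatedly contract each zero-length edge, merging its endpoints and deleting the edge, until none remains; call the result $\hat{\graph}$. Every surviving edge of $\hat{\graph}$ is strictly positive, so $\hat{\graph}$ is a floating point \posS instance, and the previous paragraph gives a \partialOrdering for it. I would then check that this ordering drives a correct execution of Dijkstra's algorithm on $\graph$: $\actual(\cdot)$ is constant on each contracted class, and the shortest-path tree of $\graph$ is recovered from that of $\hat{\graph}$ by expanding each class along its zero-length edges. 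In short, ``\partialOrdering for floating point \nonnegS'' must be read as ``\partialOrdering for the \posS instance obtained after contracting zero-length edges,'' which is the only reading under which the claim can hold.

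I expect the contraction step to be the only place requiring care. It is cleanest in the undirected (robotics) setting the paper targets: there a zero-length edge really does identify two copies of the same location, so merging its endpoints is lossless and the order of contractions is immaterial. For general \emph{directed} \nonnegS one would instead have to contract the classes of mutual zero-length reachability (the zero-weight strongly connected components) and argue that the remaining directed zero-length edges introduce no spurious shortcuts; since the companion impossibility result of the paper shows that a partial ordering cannot suffice once true zero-length edges are present, I would simply state and prove the corollary for the undirected/\posS case and point to that impossibility result for why nothing stronger is available. Combining the reduction above with Theorem~\ref{th:correct} then completes the argument.
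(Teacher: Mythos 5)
Your first paragraph is, in substance, the paper's entire argument: the paper offers no proof of this corollary beyond the preceding remark that integer \nonnegS is a superset of integer \posS, so a heap producing a \fullOrdering on all of integer \nonnegS in particular does so on every integer \posS instance, and Theorem~\ref{th:po} applies verbatim. Where you and the paper part ways is on the word ``\nonnegS'' in the conclusion. You are right that, read literally, the conclusion cannot hold --- once ${\minEdgeLength = 0}$ the defining inequality of \partialOrdering becomes strict and is violated by any two vertices with equal shortest-path length, which is exactly the content of the paper's own Lemma~\ref{th:insuf}. But the intended reading is that the conclusion is ``floating point \posS'': this is what the companion Corollary~\ref{co:nonneg} states, and it is the only reading consistent with Section~\ref{sec:correctfull}. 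The corollary generalizes the \emph{hypothesis} of Theorem~\ref{th:po} (from integer \posS to integer \nonnegS), not its conclusion.

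Your contraction-based salvage of the literal statement should therefore be dropped, and in any case it does not go through for directed graphs: contracting the zero-weight strongly connected components still leaves a (possibly nontrivial) DAG of directed zero-length edges between distinct components, so the quotient is not a \posS instance and ${\minEdgeLength = 0}$ persists; your own hedge --- retreating to the undirected case --- proves a different statement. This failure is not an accident but is precisely what Lemma~\ref{lem:nec} and Theorem~\ref{th:full} establish: no best-path-length partial ordering short of a full ordering suffices for \nonnegS. So keep your first paragraph, read the conclusion as floating point \posS, and delete the rest.
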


\begin{corollary} \label{co:nonneg}
Assuming an integer \nonnegS solution is used as an oracle to solve a floating point \posS, the latter will have similar runtime vs.\ the former with respect to $\numNodes$ and $\numEdges$, while $\maxEdgeLength$ will be replaced by $\maxEdgeLength/\minEdgeLength$. 
\end{corollary}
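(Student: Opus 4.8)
The plan is to obtain this as an immediate consequence of Corollary~\ref{co:po} together with the elementary observation that integer \nonnegS is a strict generalization of integer \posS: any algorithm $\mathcal{A}$ that correctly solves integer \nonnegS on every nonnegative-weight instance in particular solves integer \posS, since every strictly-positive-weight instance is also a nonnegative-weight instance. Hence the monotone integer heap $\heap$ underlying $\mathcal{A}$ produces a \fullOrdering on exactly the class of inputs to which Theorem~\ref{th:po} (via Corollary~\ref{th:nonneg}) applies. First I would invoke Corollary~\ref{th:nonneg} to conclude that $\heap$ yields a \partialOrdering for floating point \nonnegS, and therefore also for floating point \posS by the same set inclusion of problem classes; then Theorem~\ref{th:correct} shows that Dijkstra's algorithm run with $\heap$ on the companion keys $\key(\node)=\lfloor\guess(\node)/\minEdgeLength\rfloor$ (Algorithm~\ref{alg1}) correctly solves the floating point \posS instance.

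Next I would track the parameters. The control flow of Algorithm~\ref{alg1} performs one extract-min operation per node and at most one key-update operation per edge, exactly as a standard Dijkstra run would, so the dependence of the running time on $\numNodes$ and $\numEdges$ is inherited verbatim from $\mathcal{A}$. The only quantity that changes is the magnitude of the integers handed to $\heap$: for a floating point \posS instance with positive weights bounded above by $\maxEdgeLength$, the best-guess values $\guess(\node)$ that are ever inserted are bounded in terms of $\maxEdgeLength$, and we divide by $\minEdgeLength$ before truncating, so the integer keys $\key(\node)$ live in a range whose top end scales with $\maxEdgeLength/\minEdgeLength$ rather than $\maxEdgeLength$. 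Consequently, wherever the runtime or word-size requirement of $\mathcal{A}$ refers to $\maxEdgeLength$ (for instance bucketing-based queues, or the $\wordLen = \bigTheta{\log\maxEdgeLength}$ assumption), that term becomes $\maxEdgeLength/\minEdgeLength$ in the resulting floating point \posS algorithm. This is precisely the substitution asserted by Corollary~\ref{co:po}, now applied with $\mathcal{A}$ playing the role of the integer \posS solver, which establishes the statement.

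The one point that needs a little care, and hence the main (minor) obstacle, is making precise what ``similar runtime'' and the substitution $\maxEdgeLength \mapsto \maxEdgeLength/\minEdgeLength$ mean when $\maxEdgeLength/\minEdgeLength$ is not an integer, and when monotone-priority-queue bounds are sometimes phrased in terms of the largest key ever inserted rather than $\maxEdgeLength$ itself. I would handle this by noting that $\lceil \maxEdgeLength/\minEdgeLength \rceil$ is an integer upper bound on any single increment to $\key(\node)$, that $\log\lceil\maxEdgeLength/\minEdgeLength\rceil = \bigTheta{\log(\maxEdgeLength/\minEdgeLength)}$ once $\maxEdgeLength/\minEdgeLength \geq 2$, and that the $\bigOh{\numEdges}$ scan needed to recover $\minEdgeLength$ (and, if a particular queue demands it, $\maxEdgeLength$) is dominated by the heap-driven cost of the algorithm. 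With these observations the corollary follows by the identical argument that yielded Corollary~\ref{co:po}.
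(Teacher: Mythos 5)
Your proposal is correct and follows essentially the same route as the paper: the paper simply notes that \nonnegS is a superset of \posS, so any monotone integer heap that produces a \fullOrdering for integer \nonnegS does so in particular for integer \posS, and Corollary~\ref{co:po} then applies verbatim. The only thing you add beyond the paper's one-line justification is the explicit parameter-tracking and the (correct but unstated in the paper) observation that one should take $\lceil\maxEdgeLength/\minEdgeLength\rceil$ to keep the key range integral, which is a helpful clarification rather than a different argument.
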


\section{Faster Runtimes For Floating Point \posS}

We now discuss new runtimes for floating point \posS that are enabled by Corollary~\ref{co:po}. The work by \cite{Thorup4} solves integer \nonnegS in time ${\bigOh{\numEdges + \numNodes \log \log \maxEdgeLength}}$, combining their result with corollary~\ref{co:po} gives:

\begin{corollary} \label{co:a}
Floating point \posS can be solved in time ${\bigOh{\numEdges + \numNodes \log \log \frac{\maxEdgeLength}{\minEdgeLength}  }}$.
\end{corollary}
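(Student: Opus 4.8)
The plan is to instantiate the oracle construction of Section~\ref{sec:heaps_orical} with the integer \nonnegS algorithm of \cite{Thorup4}. First I would recall the ingredient already stated above: \cite{Thorup4} solves integer \nonnegS in time $\bigOh{\numEdges + \numNodes\log\log\maxEdgeLength}$, and it does so in exactly the form our framework requires --- namely as Dijkstra's algorithm equipped with a monotone integer priority queue whose extraction sequence is a \fullOrdering. Since \nonnegS is a super-set of \posS, the same algorithm and the same bound apply verbatim to integer \posS, so the hypothesis of Theorem~\ref{th:po} (and hence of Corollary~\ref{co:po}) is satisfied.

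Next I would run Algorithm~\ref{alg1} with this heap: scan $\edgeSet$ once in $\bigOh{\numEdges}$ time to compute $\minEdgeLength = \min_{\edge\in\edgeSet}\edgeLength{\edge} > 0$ (positivity is guaranteed because the instance is \posS), attach to each node the companion integer key $\key(\node) = \lfloor\guess(\node)/\minEdgeLength\rfloor$, and use $\key(\node)$ rather than $\guess(\node)$ to place and update $\node$ in $\heap$. By Theorem~\ref{th:po} the induced $\minEdgeLength$-nonmonotonic floating point queue extracts nodes in a \partialOrdering, and Theorem~\ref{th:correct} then guarantees that the algorithm terminates with $\actual(\node) = \guess(\node)$ for every $\node \in \vertexSet$; i.e., it correctly solves floating point \posS.

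It remains only to read off the running time via Corollary~\ref{co:po}: the $\numNodes$- and $\numEdges$-dependence of \cite{Thorup4} is inherited unchanged, while each occurrence of the maximum weight $\maxEdgeLength$ is replaced by $\maxEdgeLength/\minEdgeLength$, because the integer key can jump by at most $\lceil\maxEdgeLength/\minEdgeLength\rceil$ across a single relaxed edge. Thus $\bigOh{\numEdges + \numNodes\log\log\maxEdgeLength}$ becomes $\bigOh{\numEdges + \numNodes\log\log\frac{\maxEdgeLength}{\minEdgeLength}}$, and the $\bigOh{\numEdges}$ pre-scan for $\minEdgeLength$ is absorbed into the $\numEdges$ term, yielding the claimed bound.

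I expect the only genuine subtlety --- and the step I would pin down most carefully --- to be the precise sense in which \cite{Thorup4}'s running time ``depends on $\maxEdgeLength$'' as demanded by Corollary~\ref{co:po}. The integer keys $\key(\node)$ are bounded by (scaled) path lengths, up to roughly $\numNodes\maxEdgeLength/\minEdgeLength$, not by scaled edge lengths, so one must confirm that the $\log\log$ factor in \cite{Thorup4} is controlled by the largest key increment along an edge (namely $\lceil\maxEdgeLength/\minEdgeLength\rceil$) rather than by the magnitude of the keys themselves, and that the word-size assumption $\wordLen = \bigTheta{\log(\maxEdgeLength/\minEdgeLength)}$ (together with $\wordLen = \bigOmega{\log\numNodes}$, already standard for \cite{Thorup4}) suffices to carry out the floating-point-to-integer truncation on line~\ref{alg1:3} and all heap arithmetic in $\bigOh{1}$ time. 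Granting this, the corollary is an immediate specialization of Corollary~\ref{co:po}.
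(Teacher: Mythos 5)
Your proposal follows the paper's proof essentially verbatim: the paper simply cites the $\bigOh{\numEdges + \numNodes \log\log \maxEdgeLength}$ bound of \cite{Thorup4} for integer \nonnegS, observes that \nonnegS subsumes \posS, and invokes Corollary~\ref{co:po} to replace $\maxEdgeLength$ with $\maxEdgeLength/\minEdgeLength$. You arrive at the same conclusion by the same instantiation; the only addition is your closing caveat about whether the $\log\log$ factor in \cite{Thorup4} is controlled by per-edge key increments rather than by absolute key magnitudes, a point the paper passes over silently and which you correctly resolve in favor of the increment (since a monotone priority queue in Dijkstra's algorithm only ever handles keys in a window of width one maximum edge increment above the current minimum).
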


\begin{corollary} \label{co:a.1}
Floating point \posS can be solved in expected time ${\bigOh{\numEdges + \numNodes \log \log \frac{\maxEdgeLength}{\minEdgeLength}  }}$.
\end{corollary}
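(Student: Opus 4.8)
\textbf{Proof proposal for Corollary~\ref{co:a.1}.}
The plan is to reuse the argument of Corollary~\ref{co:a} verbatim, only feeding the oracle reduction the \emph{randomized} variant of the priority queue of \cite{Thorup4} rather than its worst-case variant. First I would recall that, in addition to the worst-case bound used for Corollary~\ref{co:a}, \cite{Thorup4} also gives a monotone integer priority queue yielding an integer \nonnegS algorithm whose \emph{expected} running time is ${\bigOh{\numEdges + \numNodes \log \log \maxEdgeLength}}$ (the expectation being over the data structure's internal coin flips). Since integer \nonnegS contains integer \posS, this queue in particular creates a \fullOrdering for integer \posS in that expected time, so it satisfies the hypothesis of Theorem~\ref{th:po}.

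Next I would invoke the oracle construction of Section~\ref{sec:heaps_orical} (Algorithm~\ref{alg1}, Theorem~\ref{th:po}, Corollary~\ref{co:po}). The point to emphasize is that this reduction is \emph{deterministic}: Algorithm~\ref{alg1} scans the edges once to find $\minEdgeLength$, and thereafter performs exactly one $\extractMin{\heap}$ per processed node and at most $\numEdges$ calls to $\updateValue{\heap}{\cdot,\cdot}$, each preceded by the constant-time computation of the integer key $\key(\node)=\lfloor \guess(\node)/\minEdgeLength\rfloor$. All randomness lives inside $\heap$, and the sequence of heap operations requested is fixed (it does not depend on $\heap$'s coins, only on the graph). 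Hence, by linearity of expectation, the expected total work is the deterministic $\bigOh{\numEdges}$ overhead plus the expected cost of this fixed operation sequence on $\heap$, and by Corollary~\ref{co:po} the parameter $\maxEdgeLength$ governing that cost is replaced by $\maxEdgeLength/\minEdgeLength$, giving expected time ${\bigOh{\numEdges + \numNodes \log \log \frac{\maxEdgeLength}{\minEdgeLength}}}$. Correctness needs no new argument: Theorem~\ref{th:po} guarantees that whatever (valid) extraction order $\heap$ produces is a \partialOrdering, and Theorem~\ref{th:correct} then guarantees Dijkstra's algorithm sets $\actual(\node)=\guess(\node)$ for all $\node\in\vertexSet$.

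The main obstacle I anticipate is a bookkeeping one rather than a conceptual one: checking that the expected bound of the randomized queue is stated in a form (per-operation, or amortized over a monotone sequence of insertions/decrease-keys and extractions) that composes cleanly under our fixed operation sequence, and confirming the word-size hypothesis of Section~\ref{sec:prelim} remains consistent once the effective weight range becomes $\maxEdgeLength/\minEdgeLength$ --- i.e., that the integer keys $\lfloor\guess(\node)/\minEdgeLength\rfloor$, which are bounded by roughly $\numNodes\maxEdgeLength/\minEdgeLength$, still fit in $\bigOh{\wordLen}$ bits exactly as the integer path lengths (bounded by $\numNodes\maxEdgeLength$) did in the original integer problem. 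Both are already implicitly handled by Corollary~\ref{co:po}, so once that accounting is made explicit the corollary follows immediately. \qed
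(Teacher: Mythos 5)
Your proposal rests on a factual premise that the paper does not support and that I believe is incorrect: you assert that \cite{Thorup4} provides, in addition to its worst-case queue, a separate \emph{randomized} monotone integer priority queue whose running time bound of ${\bigOh{\numEdges + \numNodes\log\log C}}$ holds only in expectation over internal coin flips. The data structure in \cite{Thorup4} is deterministic with that \emph{worst-case} bound; there is no randomized variant being invoked anywhere in the paper. The only reason an ``expected time'' row appears for \cite{Thorup4} in Tables~\ref{table:before} and~\ref{table:after} is the trivial observation, stated explicitly in the related-work section, that \cite{Thorup4} ``provides solutions in worst case time (and therefore also expected time)'' --- a worst-case bound always dominates the corresponding expected bound. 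Consequently Corollary~\ref{co:a.1} needs no new argument at all: it is an immediate logical consequence of Corollary~\ref{co:a}, since the worst-case time ${\bigOh{\numEdges + \numNodes\log\log\frac{\maxEdgeLength}{\minEdgeLength}}}$ established there is, a fortiori, an expected-time bound.

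Your surrounding discussion --- that the oracle reduction of Section~\ref{sec:heaps_orical} is deterministic, issues a fixed sequence of heap operations, and composes cleanly with any Las~Vegas queue by linearity of expectation, and that the key range becomes ${\maxEdgeLength/\minEdgeLength}$ so Corollary~\ref{co:po} applies --- is all sound reasoning and \emph{would} be the right argument if one were plugging in a genuinely randomized queue (it is, in essence, how the paper treats the expected-time results of \cite{Cherkassky.etal}, \cite{Raman2}, and \cite{Thorup5} elsewhere in Table~\ref{table:after}). But for Corollary~\ref{co:a.1} specifically, that machinery is unnecessary and is being applied to an object that does not exist. Replace the opening claim with ``Corollary~\ref{co:a} gives a worst-case bound, and worst case dominates expectation,'' and the proof is complete in one line.
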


Similarly, \cite{Raman2} proves that a combination of \cite{Ahuja.etal} and \cite{Cherkassky.etal} can solve integer \nonnegS in time $\bigOh{\numEdges + \numNodes(\log \maxEdgeLength \log \log \maxEdgeLength)^{1/3}}$ so:

\begin{corollary} \label{co:b}
Floating point \posS can be solved in time ${\bigOh{\numEdges + \numNodes(\log \frac{\maxEdgeLength}{\minEdgeLength} \log \log \frac{\maxEdgeLength}{\minEdgeLength})^{1/3}}}$.
\end{corollary}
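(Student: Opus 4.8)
The plan is to cite two previously established results and chain them together. Corollary~\ref{co:b} claims that floating point \posS can be solved in time $\bigOh{\numEdges + \numNodes(\log \frac{\maxEdgeLength}{\minEdgeLength} \log \log \frac{\maxEdgeLength}{\minEdgeLength})^{1/3}}$, and the sentence immediately preceding it in the excerpt already supplies everything needed: \cite{Raman2} establishes that (via a combination of \cite{Ahuja.etal} and \cite{Cherkassky.etal}) integer \nonnegS can be solved in time $\bigOh{\numEdges + \numNodes(\log \maxEdgeLength \log \log \maxEdgeLength)^{1/3}}$, and Corollary~\ref{co:po} says that any integer \posS (or \nonnegS) solution used as an oracle to solve floating point \posS keeps the same dependence on $\numNodes$ and $\numEdges$ while replacing $\maxEdgeLength$ by $\maxEdgeLength/\minEdgeLength$.

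First I would invoke Corollary~\ref{th:nonneg} (or equivalently Theorem~\ref{th:po} together with the fact that \nonnegS $\supseteq$ \posS) to confirm that the Raman--Ahuja--Cherkassky integer heap, which produces a \fullOrdering for integer \nonnegS, can be used as an oracle to construct a \partialOrdering for floating point \posS. Then by Theorem~\ref{th:correct}, Dijkstra's algorithm run on that \partialOrdering correctly computes all $\actual(\node)$. Second I would apply Corollary~\ref{co:po}: the resulting floating point algorithm inherits the $\numNodes,\numEdges$ dependence of the integer algorithm but with every occurrence of $\maxEdgeLength$ replaced by $\maxEdgeLength/\minEdgeLength$. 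Substituting $\maxEdgeLength \mapsto \maxEdgeLength/\minEdgeLength$ into $\bigOh{\numEdges + \numNodes(\log \maxEdgeLength \log \log \maxEdgeLength)^{1/3}}$ yields exactly the claimed bound. The preprocessing cost of computing $\minEdgeLength$ (and, if the integer queue needs it, $\maxEdgeLength$) by scanning the edge set is $\bigOh{\numEdges}$, which is absorbed into the first term.

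There is essentially no hard step here; the corollary is a mechanical composition of the general oracle machinery (Sections~\ref{sec:po} and~\ref{sec:heaps_orical}) with a single externally cited integer SSSP runtime. The only point deserving a moment's care is making sure the cited integer algorithm genuinely falls within the scope of our reduction --- i.e.\ that it is an \emph{integer} algorithm driven by a monotone integer priority queue constructing a \fullOrdering, rather than, say, a scaling algorithm that already manipulates the numbers in some incompatible way --- but \cite{Raman2} frames it in exactly this bucketing/priority-queue form, so Corollary~\ref{co:po} applies directly. Thus the ``proof'' is a one-line appeal: combine \cite{Raman2} with Corollary~\ref{co:po}.
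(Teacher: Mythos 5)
Your proposal matches the paper's own justification of Corollary~\ref{co:b} exactly: the paper simply states the integer bound of \cite{Raman2} (built on \cite{Ahuja.etal} and \cite{Cherkassky.etal}) and invokes the oracle machinery of Corollary~\ref{co:po} (equivalently Corollary~\ref{co:nonneg}) to replace $\maxEdgeLength$ with $\maxEdgeLength/\minEdgeLength$. Your extra step of spelling out the chain through Theorem~\ref{th:po} and Theorem~\ref{th:correct}, and the sanity check that the cited algorithm is indeed a monotone-integer-priority-queue construction of a \fullOrdering, is a reasonable unpacking of the same argument, not a different route.
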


While these algorithms/corollaries cover a large chunk of the problem space defined by ${\numNodes \times \numEdges \times \maxEdgeLength}$, there are a variety of additional integer \nonnegS and \posS algorithms that provide the best theoretical runtime for less frequently addressed regions of that space. We now briefly summarize these.
Leveraging integer-based result from \cite{Thorup3}:
\begin{corollary} \label{co:e}
Floating point \posS can be solved in time ${\bigOh{\numNodes + \numEdges\log\log\numEdges}}$. 
\end{corollary}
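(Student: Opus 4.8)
The plan is to invoke Corollary~\ref{co:po} essentially as a black box. By Corollary~\ref{co:po}, any solution to integer \nonnegS (hence integer \posS, since \posS $\subseteq$ \nonnegS) that works by constructing a \fullOrdering via a monotone integer priority queue can be used as an oracle to solve floating point \posS, with the resulting runtime identical in $\numNodes$ and $\numEdges$ (and with $\maxEdgeLength$ replaced by $\maxEdgeLength/\minEdgeLength$ wherever it appears). So the only thing to verify is that the algorithm of \cite{Thorup3} is of the required form: it solves integer \nonnegS in time $\bigOh{\numNodes + \numEdges \log\log\numEdges}$, and it proceeds by extracting nodes from a monotone integer priority queue in nondecreasing key order, i.e.\ it produces a \fullOrdering.

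First I would state that \cite{Thorup3} gives an integer \nonnegS algorithm running in $\bigOh{\numNodes + \numEdges \log\log\numEdges}$ time that is a Dijkstra-style algorithm built on a monotone integer priority queue creating a \fullOrdering. Second, I would note that its runtime bound $\bigOh{\numNodes + \numEdges \log\log\numEdges}$ has no dependence on $\maxEdgeLength$ at all --- it is expressed purely in terms of $\numNodes$ and $\numEdges$ --- so the substitution $\maxEdgeLength \mapsto \maxEdgeLength/\minEdgeLength$ from Corollary~\ref{co:po} leaves the bound unchanged. Third, I would apply Theorem~\ref{th:po} (together with Corollary~\ref{th:nonneg}) to conclude that the heap of \cite{Thorup3}, used inside Algorithm~\ref{alg1}, produces a \partialOrdering for floating point \posS, and then apply Theorem~\ref{th:correct} to conclude correctness. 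Putting these together yields the claimed $\bigOh{\numNodes + \numEdges \log\log\numEdges}$ bound for floating point \posS.

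The main obstacle, such as it is, is purely one of fidelity to the cited work rather than a genuine mathematical difficulty: one must confirm that the \cite{Thorup3} algorithm really does fit the abstract interface assumed throughout Section~\ref{sec:heaps_orical} --- namely, that it is literally Dijkstra's algorithm with a monotone integer priority queue that extracts keys in nondecreasing order (a \fullOrdering), as opposed to some more exotic RAM trick that doesn't route all path-length comparisons through a priority-queue abstraction. Assuming that verification goes through (as the paper clearly intends), the proof is a one-line appeal to Corollary~\ref{co:po}, and I would keep it at that length, perhaps with a sentence reminding the reader that the absence of a $\maxEdgeLength$ term is why this particular corollary comes out especially cleanly.
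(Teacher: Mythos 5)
Your proposal is exactly the paper's intended argument: the paper presents Corollary~\ref{co:e} as an immediate consequence of Corollary~\ref{co:po} (equivalently Corollary~\ref{co:nonneg}) applied to the integer \nonnegS algorithm of \cite{Thorup3}, with the observation that the bound $\bigOh{\numNodes + \numEdges\log\log\numEdges}$ has no $\maxEdgeLength$ dependence and hence is unaffected by the $\maxEdgeLength \mapsto \maxEdgeLength/\minEdgeLength$ substitution. Your additional care in explicitly checking that \cite{Thorup3} fits the monotone-integer-priority-queue/\fullOrdering interface is the right thing to flag, but it does not change the route.
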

and \cite{Raman2}:
\begin{corollary} \label{co:f}
Floating point \posS can be solved in time ${\bigOh{\numNodes + \numEdges\log\log \frac{\maxEdgeLength}{\minEdgeLength}}}$. 
\end{corollary}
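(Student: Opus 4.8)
The plan is to obtain Corollary~\ref{co:f} in exactly the way Corollaries~\ref{co:a} and~\ref{co:b} are obtained: exhibit an integer \nonnegS algorithm with the right shape of runtime and the structural property required by the reduction of Section~\ref{sec:heaps_orical}, and then transfer it through Corollary~\ref{th:nonneg} (for correctness of the induced \partialOrdering) and Corollary~\ref{co:nonneg} (for runtime). First I would recall the integer bound: \cite{Raman2} shows that integer \nonnegS is solvable in time $\bigOh{\numNodes + \numEdges \log\log \maxEdgeLength}$ by running Dijkstra's algorithm with a radix-style monotone integer priority queue. The only thing that needs checking at this stage is structural compatibility --- that this queue extracts keys in a monotonically nondecreasing order (hence produces a \fullOrdering on $\actual(\node)$) and relies on no feature of integers beyond their being keys of such a queue --- so that its hypotheses match those of Corollary~\ref{th:nonneg}.

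Second, I would instantiate the oracle construction of Section~\ref{sec:heaps_orical}. Given a floating point \posS instance, scan $\edgeSet$ once in $\bigOh{\numEdges}$ time to compute $\minEdgeLength$ (and $\maxEdgeLength$, which the radix queue needs in order to size its bucket hierarchy), then run Algorithm~\ref{alg1} with $\heap$ set to the \cite{Raman2} queue, storing with each $\node$ its companion key $\key(\node) = \lfloor \guess(\node)/\minEdgeLength \rfloor$ and maintaining $\node$'s position in $\heap$ by $\key(\node)$. By Corollary~\ref{th:nonneg} the resulting extraction order is a \partialOrdering, so by Theorem~\ref{th:correct} the algorithm correctly computes $\actual(\node)$ for all $\node \in \vertexSet$ together with the shortest-path tree, i.e., it solves the floating point \posS instance.

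Third, the running time. The $\numNodes$ and $\numEdges$ terms carry over verbatim, since in the worst case each node still receives a distinct integer key and the queue performs the same operations. The $\maxEdgeLength$-dependence is the point that needs care: a radix/monotone queue's $\log\log\maxEdgeLength$ factor measures the \emph{span} of edge weights it must keep buckets for, not the magnitude of the largest key; under our transformation the effective integer weight of an edge $\edge_{ij}$ is $\key(\node_j) - \key(\node_i) \in \{1,\dots,\lceil \maxEdgeLength/\minEdgeLength\rceil\}$ (using $\minEdgeLength \le \edgeLength{\edge_{ij}} \le \maxEdgeLength$ and elementary properties of $\lfloor\cdot\rfloor$), so the span shrinks from $\maxEdgeLength$ to $\bigOh{\maxEdgeLength/\minEdgeLength}$. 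This is precisely the substitution $\maxEdgeLength \mapsto \maxEdgeLength/\minEdgeLength$ asserted by Corollary~\ref{co:nonneg}, and the extra $\bigOh{\numEdges}$ preprocessing is absorbed into the $\numEdges$ term. Substituting into $\bigOh{\numNodes + \numEdges \log\log \maxEdgeLength}$ then yields $\bigOh{\numNodes + \numEdges \log\log \frac{\maxEdgeLength}{\minEdgeLength}}$.

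The main obstacle is the third step, and within it the claim that the queue's cost really depends on the \emph{span} $\maxEdgeLength/\minEdgeLength$ of the transformed weights and not on the magnitude $\minmaxPathLength/\minEdgeLength \le \numNodes\maxEdgeLength/\minEdgeLength$ of the largest key (which would leak an unwanted $\numEdges\log\log\numNodes$ term into the bound). Verifying this amounts to confirming that \cite{Raman2}'s bucket hierarchy, like all radix-heap variants, never needs to represent more than a window of width $\bigOh{\maxEdgeLength/\minEdgeLength}$ ahead of the current minimum --- which is exactly the property that makes the queue monotone in the first place and which Corollary~\ref{co:nonneg} already packages in general. Everything else --- correctness of the \partialOrdering, the $\bigOh{\numEdges}$ cost of computing $\minEdgeLength$, and the bookkeeping of the companion keys --- has been established in Sections~\ref{sec:po} and~\ref{sec:heaps_orical}.
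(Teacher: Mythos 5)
Your proposal is correct and follows essentially the same route the paper takes: cite the $\bigOh{\numNodes + \numEdges\log\log\maxEdgeLength}$ integer \nonnegS bound from \cite{Raman2} and pass it through the oracle construction of Section~\ref{sec:heaps_orical}, invoking Corollary~\ref{th:nonneg} for correctness and Corollary~\ref{co:nonneg} for the substitution $\maxEdgeLength \mapsto \maxEdgeLength/\minEdgeLength$. The extra care you take to distinguish the span of active keys from the magnitude of the largest key is sound but is already packaged by the paper in Corollary~\ref{co:nonneg}, which it applies to all the new-runtime corollaries uniformly.
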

and \cite{Cherkassky.etal.MP96} for both:
\begin{corollary} \label{co:g}
Floating point \posS can be solved in time ${\bigOh{\numEdges + \numNodes(B+\frac{\maxEdgeLength}{B\minEdgeLength})}}$ for user defined parameter ${B < C+1}$
\end{corollary}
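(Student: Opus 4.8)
The plan is to obtain this bound by a direct application of Corollary~\ref{co:po}, using the multi-level-bucket implementation of Dijkstra's algorithm from \cite{Cherkassky.etal.MP96} as the integer oracle. First I would recall the relevant integer fact: \cite{Cherkassky.etal.MP96} solve integer \nonnegS (and hence integer \posS) in time $\bigOh{\numEdges + \numNodes(B + \maxEdgeLength/B)}$ for any user-chosen radix parameter $B$ with $B < \maxEdgeLength + 1$, and --- this is the property we actually need --- their priority queue is a monotone integer priority queue that extracts keys in nondecreasing order, i.e.\ it constructs a \fullOrdering when driving Dijkstra's algorithm. That is exactly the hypothesis of Theorem~\ref{th:po}, so I would check (by inspecting the bucket structure of \cite{Cherkassky.etal.MP96}) that this is indeed the case.

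Next I would invoke the reduction of Section~\ref{sec:heaps_orical}: run Algorithm~\ref{alg1} with this queue as $\heap$, storing with each node both the floating point best-guess $\guess(\node)$ and the companion integer key $\key(\node) = \lfloor \guess(\node)/\minEdgeLength \rfloor$, and maintaining $\node$'s position in $\heap$ by $\key(\node)$. By Theorem~\ref{th:po} the resulting floating point heap is $\minEdgeLength$-nonmonotonic and produces a \partialOrdering, and by Theorem~\ref{th:correct} Dijkstra's algorithm then solves floating point \posS correctly. The runtime follows from Corollary~\ref{co:po}: the dependence on $\numNodes$ and $\numEdges$ is unchanged (the $\bigOh{\numEdges}$ scan for $\minEdgeLength$ is absorbed), while every occurrence of $\maxEdgeLength$ in the integer bound is replaced by $\maxEdgeLength/\minEdgeLength$, since dividing all weights by $\minEdgeLength$ before truncation scales the effective maximum weight that drives the bucket hierarchy by a factor of $1/\minEdgeLength$. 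Substituting into $\bigOh{\numEdges + \numNodes(B + \maxEdgeLength/B)}$ then gives $\bigOh{\numEdges + \numNodes(B + \maxEdgeLength/(\minEdgeLength B))}$, as claimed.

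The only step needing genuine care --- and the main obstacle --- is the admissible range of $B$. In the integer algorithm the constraint $B < \maxEdgeLength + 1$ arises from requiring a single bucket level to span the range of outgoing-edge values; after the substitution $\maxEdgeLength \mapsto \maxEdgeLength/\minEdgeLength$ the faithful constraint is $B < \maxEdgeLength/\minEdgeLength + 1$, which is the form appearing in Table~\ref{table:after}. I would therefore either restate the corollary with this corrected bound, or note that whenever $\minEdgeLength \le 1$ the stated condition $B < \maxEdgeLength + 1$ is no weaker and hence still sufficient. Apart from this bookkeeping about the parameter, the argument is entirely a matter of citing the integer result of \cite{Cherkassky.etal.MP96} together with Corollary~\ref{co:po}, and contributes no new technical content.
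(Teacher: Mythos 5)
Your proposal takes essentially the same route as the paper: Corollary~\ref{co:g} is obtained by plugging the integer bound of \cite{Cherkassky.etal.MP96} into the reduction of Corollary~\ref{co:po}/Theorem~\ref{th:po}, with $\maxEdgeLength$ replaced by $\maxEdgeLength/\minEdgeLength$; the paper gives no further argument, and your reconstruction supplies the same reasoning. One small factual slip: you write that the corrected constraint $B < \maxEdgeLength/\minEdgeLength + 1$ ``is the form appearing in Table~\ref{table:after},'' but the table in fact also states $B < \maxEdgeLength + 1$, so the discrepancy you (rightly) flag about the admissible range of $B$ is present both in the corollary and in the table, not resolved in the latter.
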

and:
\begin{corollary} \label{co:h}
Floating point \posS can be solved in time ${\bigOh{\numEdges\Delta + \numNodes(\Delta + \frac{\maxEdgeLength}{\Delta\minEdgeLength})}}$,  for user defined $\Delta$.
\end{corollary}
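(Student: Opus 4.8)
The plan is to obtain this corollary as an immediate instance of Corollary~\ref{co:nonneg}, using the bucket-based integer \nonnegS algorithm of \cite{Cherkassky.etal.MP96} as the oracle. First I would recall that \cite{Cherkassky.etal.MP96} provide a monotone integer priority queue which, plugged into Dijkstra's algorithm on integer \nonnegS, processes nodes in a full ordering and runs in total time ${\bigOh{\numEdges\Delta + \numNodes(\Delta + \maxEdgeLength/\Delta)}}$ for any user-chosen integer parameter $\Delta$ with $0 < \Delta < \maxEdgeLength$. Because this is exactly ``a heap that creates a FO when used by Dijkstra's algorithm on integer \nonnegS,'' Corollary~\ref{th:nonneg} turns it into a \partialOrdering-producing $\minEdgeLength$-nonmonotonic floating point heap for floating point \posS, and Corollary~\ref{co:nonneg} pins down the cost: the bound is inherited verbatim in $\numNodes$ and $\numEdges$, with the integer maximum-weight parameter $\maxEdgeLength$ replaced everywhere by $\maxEdgeLength/\minEdgeLength$.

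Second, I would carry out that substitution explicitly. Since the companion integer keys are $\key(\node) = \lfloor \guess(\node)/\minEdgeLength \rfloor$, each edge relaxation raises the integer key of the relaxed node, relative to that of the node being processed, by at most $\lceil \edgeLength{(\node,u)}/\minEdgeLength \rceil \le \lceil \maxEdgeLength/\minEdgeLength \rceil$; hence the effective ``maximum edge weight'' seen by the bucket structure is $\lceil \maxEdgeLength/\minEdgeLength \rceil = \bigTheta{\maxEdgeLength/\minEdgeLength}$. Substituting this into ${\bigOh{\numEdges\Delta + \numNodes(\Delta + \maxEdgeLength/\Delta)}}$ yields exactly ${\bigOh{\numEdges\Delta + \numNodes(\Delta + \frac{\maxEdgeLength}{\Delta\minEdgeLength})}}$ (with the natural parameter range $0 < \Delta < \maxEdgeLength/\minEdgeLength$, still a user-defined quantity). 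Correctness of the resulting algorithm is immediate by composing Theorem~\ref{th:po} with Theorem~\ref{th:correct}: the extraction order is a \partialOrdering, which is sufficient for Dijkstra's algorithm to solve \posS.

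I expect the only real subtlety, and the step I would check most carefully, to be confirming that the \cite{Cherkassky.etal.MP96} bound genuinely depends on the maximum edge weight $\maxEdgeLength$ (the span of simultaneously-live keys in a monotone bucket queue) and not, say, on the maximum finite shortest-path value $\minmaxPathLength$ --- if it were the latter, the substituted quantity would instead be $\minmaxPathLength/\minEdgeLength$ and the statement would need rewording. Once that dependence is verified against their analysis, nothing further is needed, since all of the data-structure-level reasoning has already been discharged in Section~\ref{sec:heaps_orical}. It is worth adding one sentence noting that the truncation in $\key(\cdot)$ does not harm the bucket queue's amortized bounds: a node's integer key is still nondecreasing over time and never falls below the current minimum bucket index, so the standard monotone-bucket accounting applies unchanged.
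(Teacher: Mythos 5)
Your proposal is correct and takes essentially the same route as the paper, which derives Corollary~\ref{co:h} with no separate argument beyond invoking Corollary~\ref{co:po}/\ref{co:nonneg} on the \cite{Cherkassky.etal.MP96} two-level bucket algorithm and replacing $\maxEdgeLength$ by $\maxEdgeLength/\minEdgeLength$. Your additional sanity checks (that the relevant parameter in their bound is the span of simultaneously live keys rather than $\minmaxPathLength$, and that the truncated keys remain nondecreasing so the monotone-bucket accounting survives) are worthwhile and, if anything, more careful than the paper's own terse treatment.
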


Corollary~\ref{co:h} is the final corollary regarding runtime presented in our current paper.

%
%

\section{Correct Dijkstra's on \nonnegS Requires a Full Ordering} \label{sec:correctfull}

It would be nice if our method could somehow be extended to handle the case ${\minEdgeLength = 0}$ in addition to ${\minEdgeLength > 0}$; however, we now prove this is impossible.
We first show that \partialOrdering is insufficient to correctly solve \nonnegS, and then proceed to prove that correctly solving \nonnegS requires a full ordering.

\subsection{Insufficiency of \partialOrdering to Solve \nonnegS using Dijkstra's Algorithm}

We have defined \partialOrdering such that it only applies to \posS.
Even if we attempt to accommodate \nonnegS by letting ${\minEdgeLength = 0}$ then \partialOrdering degenerates into ${i < j \Rightarrow \actual(\node_i) < \actual(\node_j)}$ where the latter inequality is strict. Thus, \partialOrdering can only exist if ${\edgeLength{\edge_{ij}} > 0}$ for all ${\edge_{ij} \in \edgeSet}$ --- otherwise the existence of some ${\edgeLength{\edge_{ij}} = 0}$ would guarantee that for some $i$ and $j$ it is the case that ${i < j}$ despite the fact that ${\actual(\node_i) = \actual(\node_j)}$, which would violate \partialOrdering. This contradiction leads to the following lemma:

\begin{lemma} \label{th:insuf}
\partialOrdering is insufficient to solve \nonnegS.
\end{lemma}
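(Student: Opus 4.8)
The plan is to prove the lemma by exhibiting a single \nonnegS instance for which no node sequence satisfying the defining condition of \partialOrdering exists; since ``running Dijkstra's algorithm under a \partialOrdering'' presupposes that such a sequence is available (this is literally the hypothesis of Theorem~\ref{th:correct}), this shows \partialOrdering cannot serve as the ordering mechanism for \nonnegS in general, which is exactly what ``insufficient to solve'' means here.

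First I would recall, as already observed in the paragraph preceding the lemma, that the only way to make the definition of \partialOrdering formally apply to \nonnegS is to set ${\minEdgeLength = 0}$, in which case the condition ${i < j \Rightarrow \actual(\node_i) < \actual(\node_j) + \minEdgeLength}$ degenerates to the \emph{strict} chain ${i < j \Rightarrow \actual(\node_i) < \actual(\node_j)}$. Next I would construct an explicit witness that defeats this strict chain: take ${\vertexSet = \{\startNode, a, b\}}$ with a single positive edge, say ${\edgeLength{(\startNode,a)} = 1}$, together with a zero-weight edge ${\edgeLength{(a,b)} = 0}$; then ${\actual(\startNode) = 0}$ and ${\actual(a) = \actual(b) = 1}$. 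In any ordering ${\node_1,\node_2,\node_3}$ of these three nodes, one of $a,b$ precedes the other; calling the earlier one $\node_i$ and the later one $\node_j$, we have ${i < j}$ yet ${\actual(\node_i) = \actual(\node_j)}$, contradicting the strict inequality. Hence no \partialOrdering exists for this \nonnegS instance, and therefore the hypothesis needed by Theorem~\ref{th:correct} (and by the whole \partialOrdering-based method) cannot be met for \nonnegS in general, proving the lemma.

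I do not expect any step here to be a genuine obstacle — the degenerate-strictness observation carried over from the surrounding discussion does essentially all of the work, and the two-node-plus-source witness is about as small as it gets. The one point that requires a little care is interpretive rather than technical: ``insufficient to solve'' should be read as ``there is a \nonnegS instance for which no valid \partialOrdering exists,'' not as ``a valid \partialOrdering exists but Dijkstra's algorithm fails when processing in that order.'' Phrasing the argument explicitly as a non-existence result, and tying it back to the existence hypothesis of Theorem~\ref{th:correct}, keeps the logic unambiguous. A purely cosmetic variant would be to route the same contradiction through the zero-length edge directly (``some ${\edgeLength{\edge_{ij}} = 0}$ forces ${i<j}$ with ${\actual(\node_i) = \actual(\node_j)}$''), exactly as sketched in the paragraph above the lemma; I would keep the explicit witness as well, since it makes the non-existence concrete and leaves no room for doubt about what ``insufficient'' is claiming.
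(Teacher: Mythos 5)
Your proof is correct and follows essentially the same route as the paper: observe that with ${\minEdgeLength = 0}$ the \partialOrdering condition becomes the strict chain ${i < j \Rightarrow \actual(\node_i) < \actual(\node_j)}$, which is defeated whenever two nodes share the same shortest-path length, a situation forced by a zero-length edge. The only difference is that you pin this down with an explicit three-node witness, which actually tightens the paper's slightly loose claim that \emph{any} zero-length edge ${\edge_{ij}}$ yields two nodes with equal ${\actual}$ values (true for your directed chain ${s \to a \to b}$, and in general when the zero-length edge lies on some shortest path, but not literally for every zero-length edge in a graph); this is a minor improvement in rigor rather than a different method.
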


\subsection{A Necessary Condition for Correct Dijkstra's on \nonnegS}

Lemma~\ref{th:insuf} suggests that allowing zero-length edges fundamentally changes the nature of the problem.
However, it falls short of disproving that {\it all} best-path-length-based partial orderings (that are not also full orderings) are similarly doomed with respect to \nonnegS. 
In pursuit of the latter, we now prove a condition required for a correct Dijkstra's algorithm and then define another ordering that is related to \partialOrdering but slightly more general.

We begin by proving a necessary condition for any ordering that correctly solves \nonnegS. This particular condition is useful as an analytical tool simply because we can show that it is necessary (Theorem~\ref{lem:nec}) and we can also show that it implies a full ordering is required for floating point \nonnegS in the worst case (Theorem \ref{lem:worstcase}).
In particular, the following condition guarantees that if the best path to $\node_c$ involves a zero-length sub-path immediately before reaching $\node_c$, and also there exists a worse path to $\node_c$ via some other node $\node_b$, then all of the nodes involved in the zero-length subpath must be processed before $\node_c$ in order to correctly solve \nonnegS.\\

\noindent{\bf Condition A}.
{\it If} for all $\pathSequence^*(s, \node_c)$ it is the case that there exists some $\node_a$ and ${\pathSequence^*(\node_a, \node_c) \subset \pathSequence^*(s, \node_c)}$ such that ${\edgeLength{\pathSequence^*(\node_a, \node_c)} = 0}$, {\it then}:
\begin{equation} \label{eq:extra}
\text{if}
\begin{cases}
\actual(\node_c) = \actual(\node_a) < \actual(\node_b)\\
\edgeLength{\pathSequence^*(\node_a, \node_c)} = 0\\
\edgeLength{\pathSequence^*(s, \node_b)} + \edgeLength{\edge_{bc}} > \actual(\node_c)\\
\end{cases}
\text{ then } i \leq c \text{ for all } i \text{ s.t. } \node_i \in \pathSequence^*(\node_a, \node_c)
\end{equation}
where we technically only require that  $\pathSequence^*(\node_a, \node_c)$ is a zero-length subset of {\it any} optimal path to $\node_c$ such that ${\node_a \neq \node_c}$ and ${\actual(\node_c) = \actual(\node_a)}$. In other words, we can break ties between two or more optimal paths arbitrarily.

\begin{lemma} \label{lem:nec}
Condition A is necessary to correctly solve \nonnegS using Dijkstra's algorithm.
\end{lemma}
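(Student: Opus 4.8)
The plan is to prove the contrapositive: I will show that whenever the order $\node_1,\dots,\node_n$ in which Dijkstra's algorithm processes nodes on a \nonnegS instance satisfies the two ``if'' hypotheses of Condition~A but fails its conclusion, the algorithm terminates with an incorrect value of $\actual(\node_c)$. First I would record two facts that hold for Dijkstra's algorithm on \emph{any} instance, independent of whether the processing order makes it correct. (i) At every moment $\guess(\node)$ is either $\infty$ or the length of a genuine walk from $\startNode$ to $\node$: it starts at $\infty$ and is only ever reset to $\guess(\node') + \edgeLength{\edge_{\node'\node}}$ at a moment when $\guess(\node')$ already has this property; hence, since edge lengths are nonnegative, $\guess(\node) \geq \actual(\node)$ throughout. (ii) At the instant $\node_c$ is processed, $\guess(\node_c)$ equals $\infty$ unless some in-neighbour $\node'$ of $\node_c$ was processed at an earlier step, in which case it equals $\min_{\node'} \big(\actual(\node') + \edgeLength{\edge_{\node'c}}\big)$ over such in-neighbours (with $\actual(\node')$ the value assigned when $\node'$ was processed), because a node's outgoing edges are relaxed only when it is processed. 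Correctness for $\node_c$ is exactly $\guess(\node_c) = \actual(\node_c)$ at this instant.

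Next I would unpack the failure of Condition~A's conclusion. The first ``if'' hypothesis says every optimal $\startNode$-to-$\node_c$ path ends in a zero-length subpath, so the final edge $\edge_{xc}$ of any such path has $\edgeLength{\edge_{xc}} = 0$ and its predecessor $x$ satisfies $x \neq \node_c$ and $\actual(x) = \actual(\node_c)$; thus $\{x,\node_c\}$ is itself one of the zero-length subpaths $\pathSequence^*(\node_a,\node_c)$ over which Condition~A quantifies. Reading the conclusion of Condition~A as ``there is a choice of $\pathSequence^*(\node_a,\node_c)$ all of whose nodes are processed no later than $\node_c$'' --- the intent of the ``we can break ties between two or more optimal paths arbitrarily'' remark --- its negation says that for \emph{every} such subpath some node is processed strictly after $\node_c$. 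Applying this to each length-one subpath $\{x,\node_c\}$, and noting the late node cannot be $\node_c$ (index $c$), forces every optimal predecessor of $\node_c$ to be processed strictly after step $c$.

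Now I would close the argument. By fact (ii), $\guess(\node_c)$ at step $c$ is either $\infty$ or is realised by an in-neighbour $\node'$ processed earlier with $\actual(\node') + \edgeLength{\edge_{\node'c}} = \guess(\node_c)$; if this common value were $\actual(\node_c)$ then, combining fact (i) with optimality ($\actual(\node') + \edgeLength{\edge_{\node'c}} \geq \actual(\node_c)$), $\node'$ would be an optimal predecessor of $\node_c$ processed before $\node_c$, contradicting the previous paragraph. Hence $\guess(\node_c) > \actual(\node_c)$ strictly, and when $\node_c$ is processed the algorithm commits $\actual(\node_c) := \guess(\node_c)$, a strict overestimate of the true shortest-path length; the node $\node_b$ supplied by the second ``if'' hypothesis (whose route into $\node_c$ has length $\edgeLength{\pathSequence^*(\startNode,\node_b)} + \edgeLength{\edge_{bc}} > \actual(\node_c)$, so $\node_b$ is not an optimal predecessor) certifies that a competing suboptimal path into $\node_c$ genuinely exists, while if $\node_c$ is never processed it is reported as $\infty$, again wrong since $\node_c$ is reachable. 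In all cases the output is incorrect, so Condition~A is necessary.

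I expect the main obstacle to be the second step: fixing the precise logical form of Condition~A's conclusion --- the scope of the ``$\pathSequence^*(\node_a,\node_c)$'' quantifier together with the informal ``break ties arbitrarily'' clause --- so that the failure of the conclusion translates cleanly into ``every optimal predecessor of $\node_c$ is processed after $\node_c$''. A secondary, routine point is to justify the invariant $\guess(\node) \geq \actual(\node)$ without circularly assuming that the algorithm already computes correct values.
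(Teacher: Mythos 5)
Your argument is correct and reaches the same conclusion as the paper, but it takes a meaningfully different route. The paper's proof is a two-line existential sketch: assume $c < i$, then there exists a graph (satisfying the hypotheses of Condition~A) on which the processing order also has $b < c$, so at step $c$ the key $\guess(\node_c)$ is realized by the suboptimal route through $\node_b$ and exceeds $\actual(\node_c)$ (the paper writes $\geq$, but the third hypothesis of Condition~A makes the inequality strict). You instead argue \emph{universally}: on any \nonnegS instance and any processing order satisfying the ``if'' hypotheses but violating the conclusion, Dijkstra's output for $\node_c$ is necessarily wrong. To do that you have to supply the two Dijkstra invariants the paper leaves unstated (the non-circular $\guess(\node) \geq \actual(\node)$ bound and the characterization of $\guess(\node_c)$ at extraction time as a min over already-processed in-neighbours), and you have to pin down the quantifier scope in Condition~A's conclusion carefully enough that its negation really does force every optimal zero-length predecessor of $\node_c$ to be processed strictly after $\node_c$. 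You identify this last step as the main obstacle, and rightly so --- the paper avoids it entirely by exhibiting a bespoke counterexample instance rather than reasoning about arbitrary runs. Your version is stronger and more informative (it shows Condition~A is necessary on \emph{every} instance on which its hypotheses hold, not just on some adversarial one), at the cost of carrying a more delicate induction on the correctness of earlier assigned values, which you handle correctly via the $\guess \geq \actual$ invariant rather than assuming earlier nodes were assigned correctly.
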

\begin{proof}
(by contradiction) Assume that ${c < i}$. Then by construction it is possible to create a graph satisfying Condition A such that nodes are processed in an order such that ${b < c < i}$, and consequently ${\guess(\node_c) = \edgeLength{\pathSequence^*(s, \node_b)} + \edgeLength{\edge_{bc}} \geq \actual(\node_c)}$ at the instant $\node_c$ is processed.
\qed
\end{proof}

\subsection{Another Ordering}

Let \alternativeOrdering be defined as a sequence ${v_1, \ldots, v_n}$ such that both:
\begin{itemize}
\item Condition A is satisfied
\item $i < j \Rightarrow \actual(\node_i) \leq \actual(\node_j) + \smallThing$
\end{itemize}
where $\smallThing$ is a constant that we are allowed to choose per each problem instance.

The definition of \alternativeOrdering allows  ${\actual(\node_i) = \actual(\node_j)}$ and so it can handle zero-length edges even when ${\smallThing = 0}$ (it handles them trivially when ${\smallThing > 0}$).

\subsection{For All $\smallThing \geq 0$, \alternativeOrdering Induces a Full Ordering for \nonnegS in the Worst Case}

We now prove that, for all ${\smallThing \geq 0}$, \alternativeOrdering induces a full ordering for \nonnegS in the worst case. We begin by examining \alternativeOrdering when  ${\smallThing > 0}$.

\begin{figure}[h!]
\centering

  \begin{xy}
    \xyimport(100,100){
      \includegraphics[width=7.5cm, trim=0 0 0 0, clip=true]{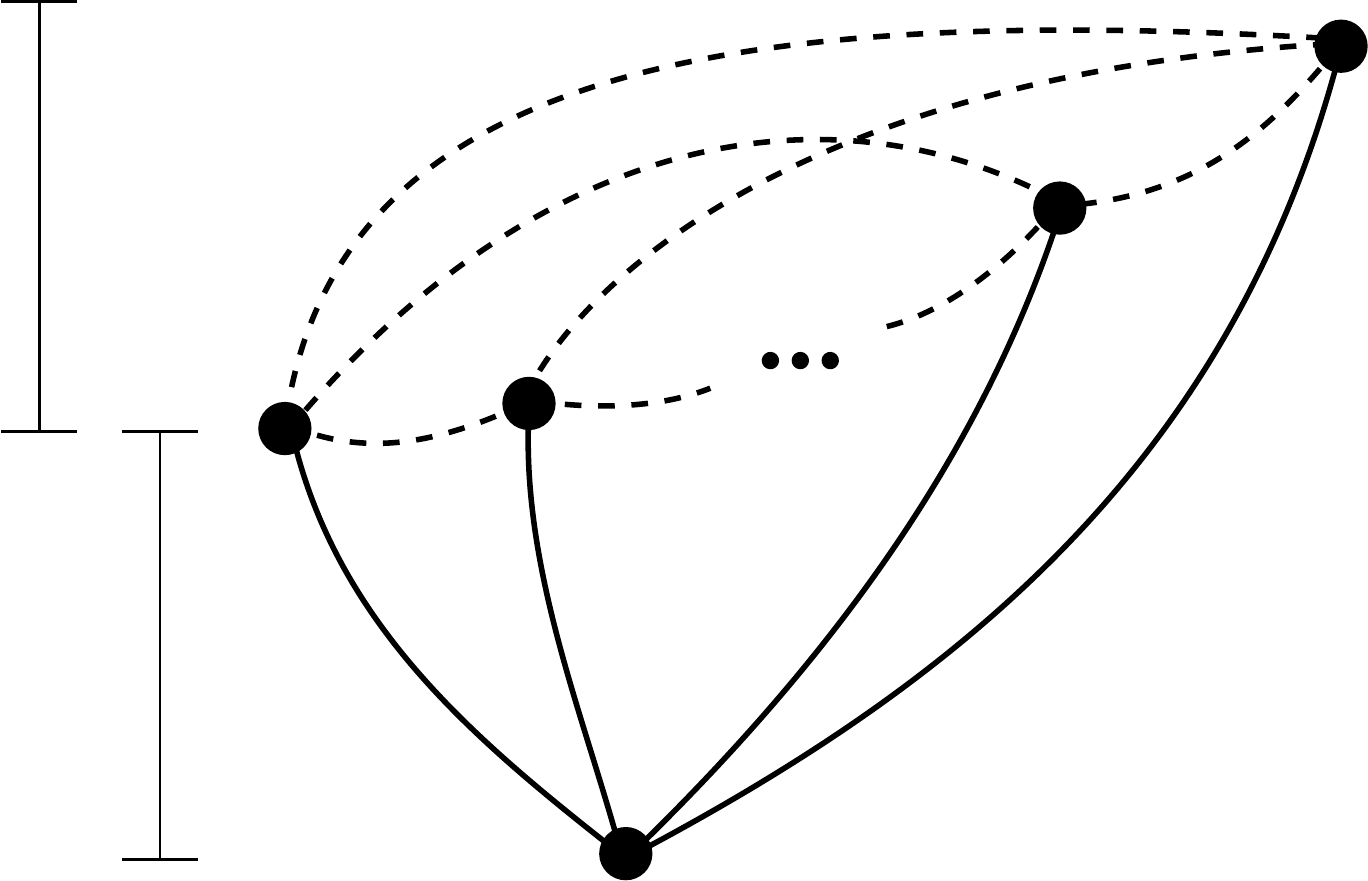}}
    ,(8,75)*{\smallThing}
    ,(6,25)*{\actual(\node_2)}
    ,(52,0)*{\node_1}
    ,(27,44)*{\node_2}
    ,(43,50)*{\node_3}
    ,(84,71)*{\node_{n-1}}
    ,(103,90)*{\node_{n}}
    ,(33,30)*{\edge_{12}}
    ,(45,35)*{\edge_{13}}
    ,(60,40)*{\edge_{1(n-1)}}
    ,(90,45)*{\edge_{1n}}
    ,(50,98)*{\edge_{2n}}
    ,(32,54)*{\edge_{23}}
  \end{xy}
  \caption{Dotted edges have zero length while the solid edges have length defined by their height span in the figure (width is used for illustrative purposes only).
For all $i$ and $j$ such that ${2 \leq i < n}$ and ${2 < j \leq n}$ and ${i < j}$ edge lengths are defined such that 
${0 < \edgeLength{\edge_{1i}} < \edgeLength{\edge_{1j}}}$ 
and
${\edgeLength{\edge_{1j}} - \edgeLength{\edge_{1i}} < \smallThing}$
and
${\edgeLength{\edge_{ij}}} = 0$ 
} \label{fig:proof}
\end{figure}

Consider Figure~\ref{fig:proof}, the dotted edges have zero length while the solid edges have length defined by their height span in the figure (width is used for illustrative purposes only). By construction, for all $i$ and $j$ such that ${2 \leq i < n}$ and ${2 < j \leq n}$ and ${i < j}$ edge lengths are defined such that 
${0 < \edgeLength{\edge_{1i}} < \edgeLength{\edge_{1j}}}$ 
and
${\edgeLength{\edge_{1j}} - \edgeLength{\edge_{1i}} < \smallThing}$
and
${\edgeLength{\edge_{ij}}} = 0$ 
Therefore, ${\actual(\node_k) = \actual(\node_2)}$ for all $k$ such that ${1 < k \leq n}$.

\begin{lemma} \label{lem:worstcase}
Condition A induces a full ordering in the worst case for any $\smallThing > 0$
\end{lemma}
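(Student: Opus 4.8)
The plan is to run the adversarial instance of Figure~\ref{fig:proof} through Condition~A and show that the constraints it imposes — together with Dijkstra processing the source first — admit only a single processing order, which is a \fullOrdering; since by Lemma~\ref{lem:nec} Condition~A is necessary for correctness on \nonnegS, this shows that on a worst-case instance the relaxation \alternativeOrdering collapses to requiring a full ordering. First I would record the basic facts about the instance: it is a legitimate \nonnegS instance (all weights nonnegative, with zero-length edges present, so $\minEdgeLength=0$); $\node_1$ is the source, has $\actual(\node_1)=0$, and is processed first; and the shortest route to each $\node_j$ with $j\ge 2$ is $\node_1,\node_2,\ldots,\node_j$, so $\actual(\node_j)=\edgeLength{\edge_{12}}$ and all of $\node_2,\ldots,\node_n$ share one common shortest-path distance. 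Hence the inequality $i<j \Rightarrow \actual(\node_i)\le\actual(\node_j)+\smallThing$ appearing in the definition of \alternativeOrdering holds for every ordering and every $\smallThing\ge 0$; so on this instance the only thing that can constrain the order is Condition~A, and it suffices to analyze Condition~A alone.

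Next I would turn the crank on Condition~A at each $\node_c=\node_j$ with $j\ge 3$. When $\node_1$ is processed it relaxes the direct edge $\edge_{1j}$, whose length is strictly larger than $\actual(\node_j)=\edgeLength{\edge_{12}}$; this supplies a strictly worse competing path to $\node_j$, furnishing the witness $\node_b$ together with an edge $\edge_{bc}$ satisfying $\edgeLength{\pathSequence^*(\startNode,\node_b)}+\edgeLength{\edge_{bc}}>\actual(\node_j)$ as the hypothesis of Condition~A demands. On the other hand the true shortest path to $\node_j$ reaches it along the zero-length edges, so it has a zero-length suffix $\pathSequence^*(\node_a,\node_j)$ with $\actual(\node_a)=\actual(\node_j)$ that runs through exactly $\node_2,\node_3,\ldots,\node_{j-1}$. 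Condition~A then forces every node on this suffix to be processed no later than $\node_j$, so $\node_2,\ldots,\node_{j-1}$ all precede $\node_j$. Ranging $j$ over $3,\ldots,\numNodes$ and combining with ``$\node_1$ first'' forces the processing sequence to be exactly $\node_1,\node_2,\ldots,\node_{\numNodes}$, the unique \fullOrdering of the instance. Since $\smallThing$ enters the construction only through the requirement that the increasing lengths $\edgeLength{\edge_{12}}<\edgeLength{\edge_{13}}<\cdots<\edgeLength{\edge_{1n}}$ fit inside a window of width strictly less than $\smallThing$ — which is possible precisely when $\smallThing>0$ — the argument goes through for every $\smallThing>0$, establishing the lemma.

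The step I expect to be the crux is verifying that Condition~A is genuinely \emph{triggered} at each $\node_j$: one must exhibit a witness $\node_b$ with an edge into $\node_j$ yielding a strictly longer path that meets the hypotheses exactly as stated, and must pick the shortest-path-with-zero-suffix so that the suffix really does span all of $\node_2,\ldots,\node_{j-1}$ rather than a smaller set. This pins down the zero-length edge structure: the optimal path to each $\node_j$ must be routed through every intermediate node, since a shortcut zero edge straight from $\node_2$ to $\node_j$ would let an adversarial ordering escape with only the weak constraint ``$\node_2$ before $\node_j$'', which is not a full ordering. Confirming that no such escape is available, and being careful about the latitude Condition~A grants for breaking ties among optimal paths, is the delicate part. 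Once the precedences are pinned down, the rest — that they are mutually consistent, that they uniquely determine the order, and that this order being a \fullOrdering is forced by a condition that Lemma~\ref{lem:nec} shows is necessary — is routine bookkeeping. (I would also expect a companion argument for $\smallThing=0$ to be needed to complete the section's title claim, handled by a separate degenerate instance.)
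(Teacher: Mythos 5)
Your proposal uses the same construction (Figure~2) and the same overall plan as the paper — instantiate Condition~A at each $\node_j$ and argue the processing order is pinned down — so at the level of strategy you have reconstructed the paper's argument. But the two proofs diverge in the specific choices they make, and both of those choices run into the same technical snag with Condition~A as written. The paper instantiates the inner hypotheses with $\node_b = \node_n$ and uses the single zero-length edge $\edge_{(k-1)k}$ as the suffix, obtaining only the local precedence ``$\node_{k-1}$ before $\node_k$'' at each step and then chaining these; you instead take $\node_b = \node_1$ with the direct (positive) edge $\edge_{1j}$ as the competing path and use the maximal zero-length suffix $\node_2,\ldots,\node_{j-1},\node_j$, which yields all the precedences at once. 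Your competing-path witness is, in fact, more defensible than the paper's: a path through $\node_n$ followed by the zero edge $\edge_{nj}$ has length exactly $\actual(\node_j)$, so the paper's $\node_b = \node_n$ does not even satisfy the strict inequality $\edgeLength{\pathSequence^*(\startNode,\node_b)} + \edgeLength{\edge_{bc}} > \actual(\node_c)$ that Condition~A requires, whereas your $\node_1 \to \node_j$ path has length $\edgeLength{\edge_{1j}} > \edgeLength{\edge_{12}} = \actual(\node_j)$ and does satisfy it. On the other hand, both you and the paper gloss over the hypothesis $\actual(\node_c) = \actual(\node_a) < \actual(\node_b)$: in this construction every $\node_j$ with $j\ge 2$ has the same $\actual$, and $\node_1$ has $\actual=0$, so no valid $\node_b$ with \emph{strictly larger} $\actual$ exists — the paper's assertion ``$\actual(\node_k) = \actual(\node_2) < \actual(\node_n)$'' contradicts its own observation just above the lemma that all these quantities are equal. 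So the literal statement of Condition~A is never triggered here and both proofs implicitly rely on the looser informal reading (``there exists a worse incoming path'') given in the prose before equation~(4).

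The concern you raised in your final paragraph about the shortcut zero edges $\edge_{2j}$ is a genuine observation that the paper's proof does not confront. Because the construction includes the zero edge $\edge_{2j}$ directly from $\node_2$ to $\node_j$ for every $j > 2$, the shortest path to $\node_j$ is not unique, and the paper's own caveat that ties between optimal paths may be broken ``arbitrarily'' means that the zero-length suffix $\pathSequence^*(\node_a,\node_c)$ could be chosen as the single edge $\edge_{2j}$, in which case Condition~A demands only that $\node_2$ precede $\node_j$ — not a full ordering. The paper quietly selects the single edge $\edge_{(k-1)k}$ as suffix, which is one (favorable) tie-break among many; your proof selects the full chain, which may not even be a zero-length suffix of any optimal path if ties are broken adversarially. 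The robust way to close this gap is exactly the modification you gesture at: drop the shortcut zero edges and keep only the consecutive ones $\edge_{k(k+1)}$, so that the shortest path to each $\node_j$ is unique and has $\node_2,\ldots,\node_{j-1}$ as its forced zero-length suffix. On the triviality of the $\smallThing=0$ case: no separate degenerate instance is needed — the paper's Theorem~\ref{th:full} simply observes that when $\smallThing = 0$ the ordering constraint becomes $\actual(\node_i) \le \actual(\node_j)$ for $i<j$, which is exactly the definition of \fullOrdering.
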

\begin{proof}
(by construction) For each $k$ such that ${2 < k < n}$ there exists
${\pathSequence^*(\node_{k-1}, \node_k) \subset \pathSequence^*(s, \node_k)}$ such that ${\edgeLength{\pathSequence^*(\node_{k-1}, \node_1)} = 0}$ 
and
${\actual(\node_k) = \actual(\node_2) < \actual(\node_n)}$ 
and
${\edgeLength{\pathSequence^*(\node_n, \node_k)} = 0}$
and
${\edgeLength{\pathSequence^*(s, \node_n)} + \edgeLength{\edge_{nk}} = \actual{\node_n} > \actual(\node_k)}$. Thus, Condition A requires that ${{k-1} \leq k < n}$. Trivially we know that ${\startNode = \node_1}$ and so ${1 < k}$, thus there is only one possible extraction order that will enable Dijkstra's algorithm to yield a correct solution to this \nonnegS: ${1, 2, \ldots, n}$ --- and this is a full ordering. Moreover, this can be accomplished for any $\smallThing > 0$.
\qed
\end{proof}

\noindent We are now ready to consider ${\smallThing \geq 0}$.

\begin{theorem} \label{th:full}
Assuming that Dijkstra's algorithm processes nodes according to a partial ordering ${i < j \Rightarrow \actual(\node_i) \leq \actual(\node_j) + \smallThing}$, where ${\smallThing \geq 0}$, any partial ordering that yields a correct algorithm will induce a full ordering in the worst case.
\end{theorem}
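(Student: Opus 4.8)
The plan is to prove Theorem~\ref{th:full} by a case split on whether $\smallThing = 0$ or $\smallThing > 0$, reducing each case to material already in hand. The only ingredients needed are Lemma~\ref{lem:nec} (Condition~A is necessary for a correct algorithm on \nonnegS), Lemma~\ref{lem:worstcase} (Condition~A forces a full ordering on the instance of Figure~\ref{fig:proof} whenever $\smallThing > 0$), and the definition of \fullOrdering from Section~\ref{sec:orderings}.

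For $\smallThing = 0$ the hypothesis on the processing order is exactly $i < j \Rightarrow \actual(\node_i) \leq \actual(\node_j)$, which is verbatim the definition of \fullOrdering. Hence any ordering meeting the hypothesis is already a full ordering on \emph{every} instance, so trivially in the worst case, and there is nothing to prove in this case. (Correct algorithms of this kind do exist — standard Dijkstra is one — so the case is non-vacuous.)

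For $\smallThing > 0$ I would proceed as follows. First invoke Lemma~\ref{lem:nec}: an ordering that makes Dijkstra's algorithm correct on all of \nonnegS must satisfy Condition~A. Then exhibit the worst-case instance of Figure~\ref{fig:proof}, checking it is a legitimate \nonnegS instance and that, for the given $\smallThing > 0$, the required edge-length relations can be realized — for example by placing the $n-1$ lengths $\edgeLength{\edge_{1i}}$ at distinct points inside an interval of width strictly less than $\smallThing$, and setting $\edgeLength{\edge_{ij}} = 0$ on the dotted edges. On that instance all of $\node_2,\ldots,\node_n$ share the common shortest-path length $\actual(\node_2)$, so the relaxed condition $i<j\Rightarrow\actual(\node_i)\leq\actual(\node_j)+\smallThing$ is compatible with \emph{every} permutation of $\{2,\ldots,n\}$; the slack allowed by $\smallThing$ is maximal here. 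Nonetheless, by Lemma~\ref{lem:worstcase} — which applies Condition~A precisely to this construction — the unique extraction order for which Dijkstra's algorithm returns a correct answer on this instance is $1,2,\ldots,n$, and this order satisfies $i<j\Rightarrow\actual(\node_i)\leq\actual(\node_j)$, i.e.\ it is a full ordering. Thus any correct $\smallThing$-relaxed ordering is forced to coincide with a full ordering on this worst-case instance. Combining the two cases yields the theorem for all $\smallThing \geq 0$.

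I expect essentially no new technical work: the substantive verification — that Figure~\ref{fig:proof} genuinely instantiates the three hypotheses of Condition~A for each $\node_k$ with $2<k<n$ (the zero-length optimal subpath ending at $\node_k$, the strictly worse alternative path through $\node_n$, and the equality $\actual(\node_c)=\actual(\node_a)$) — is already carried out inside the proof of Lemma~\ref{lem:worstcase}. The only thing to be careful about is the $\smallThing = 0$ boundary: the figure construction requires $\smallThing > 0$ (one needs a genuine window of positive width for the $\edgeLength{\edge_{1i}}$), so that case must be dispatched by the definitional observation above rather than by reusing the figure.
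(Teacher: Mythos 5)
Your proposal follows exactly the paper's own proof: the same case split on $\smallThing = 0$ versus $\smallThing > 0$, disposing of $\smallThing = 0$ by observing the hypothesis reduces verbatim to the definition of \fullOrdering, and invoking Lemma~\ref{lem:nec} together with Lemma~\ref{lem:worstcase} on the Figure~\ref{fig:proof} construction for $\smallThing > 0$. Your version is fleshed out a bit more (you explicitly check the figure is realizable and that the relaxed ordering is maximally permissive there), but the decomposition and the key lemmas are identical.
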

\begin{proof}
There are two cases: $\epsilon > 0$ and ${\epsilon = 0}$.
In the first case, we have ${\epsilon > 0}$ and Lemmas~\ref{lem:nec} and \ref{lem:worstcase} guarantee that in the worst case we will create a full ordering. In the second case, we have ${\epsilon = 0}$ and so we recover the definition of a full ordering, i.e., ${i < j \Rightarrow \actual(\node_i) \leq \actual(\node_j)}$.
\qed
\end{proof}

\section{Contributions and Related Work} \label{sec:relatedwork}

Our primary problem of interest is the {\it floating point} version of \posS.  The primary result of this paper is a general technique that allows any monotonic integer priority queue to be used to solve floating point \posS, as well as understanding why this is possible and where the technique breaks down. 

Our contributions regarding the theoretical runtime of Dijkstra's algorithm can most easily be seen by comparing Tables~\ref{table:before} and \ref{table:after}.  {\it It is important to note that our new runtime results are only applicable to cases involving directed graphs with positive floating point weights}, i.e., floating point \posS.

For disconnected graphs it is possible that  ${\numEdges < \numNodes}$, and so we report runtimes in terms of both $\numEdges$ and $\numNodes$ (other authors have occasionally omitted terms involving only $\numNodes$ because ${\numNodes = \bigOh{\numEdges}}$ for {\it connected} graphs). 
A technical report that includes some of the ideas in this work appears in \cite{otte.tr15}; however, \cite{otte.tr15} focuses on a few select applications of the general idea presented in the current paper.

\subsection{Comparison Computational Model}

Dijkstra's algorithm for \nonnegS was originally presented in \cite{Dijkstra} with a runtime of ${\bigOh{\numEdges + \numNodes^2}}$ and assuming a comparison based computational model.
\cite{Williams} refined this to $\bigOh{\numEdges\log(\numNodes) + \numNodes}$ using a more sophisticated heap. 
The Fibonacci heap by \cite{Fredman.Tarjan} allows the (integer and floating point) \nonnegS to be solved in ${\bigOh{\numEdges + \numNodes\log\numNodes}}$ --- this is believed to be the fastest possible theoretical runtime for a comparison-based computational model.

\subsection{RAM Computational Model}

An alternative style of algorithm assumes a RAM computational model in order to use heaps based on some form of bucketing. 
%
%
\cite{Dial.ACM69} uses an approximate heap data structure to achieve runtime ${\bigOh{\numEdges + \maxEdgeLength\numNodes}}$ for integer \nonnegBoundedS. 
\cite{Ahuja.etal} proves that a combination of a Radix-heap and a Fibonacci-heap will solve integer \nonnegBoundedS in ${\bigOh{\numEdges + \numNodes\sqrt{\log \maxEdgeLength}}}$.
\cite{Fredman.Willard} use AF-heaps to achieve ${\bigOh{\numEdges + \numNodes \frac{\log \numNodes}{\log \log \numNodes} }}$. Note that AF-heaps rely on Q-heaps, which assume that edge weights are integers.
The ideas of \cite{Dial.ACM69} are extended by \cite{tsitsiklis.95} to floating point \posBoundedS in a RAM model with a runtime of ${\bigOh{\numNodes + \numEdges + \maxEdgeLength/\minEdgeLength}}$.

For integer \nonnegBoundedS in the RAM model, \cite{Cherkassky.etal.MP96} uses a more complex variation of \cite{Dial.ACM69}  to achieve time ${\bigOh{\numEdges + \numNodes(B+C/B)}}$ and time ${\bigOh{\numEdges\Delta + \numNodes(\Delta + C/\Delta)}}$, for user defined parameters ${B < C+1}$ and $\Delta$.
\cite{Thorup3} achieves ${\bigOh{\numNodes + \numEdges\log\log\numEdges}}$.
\cite{Raman2} shows that combining results from \cite{Ahuja.etal} with those from \cite{Cherkassky.etal.MP96} yields a ${\bigOh{\numEdges + \numNodes(\log \maxEdgeLength \log \log \maxEdgeLength)^{1/3}}}$  solution to integer \nonnegBoundedS, and that \cite{Thorup3} can be extended to achieve ${\bigOh{\numNodes + \numEdges\log\log \maxEdgeLength }}$.
A later solution by \cite{Cherkassky.etal} achieves ${\bigOh{\numEdges + \numNodes \frac{\log \maxEdgeLength}{\log \log \maxEdgeLength} }}$. Although this is slightly worse than the result by \cite{Raman2}, it is of interest because the hot-queue data structure it uses can be modified to solve floating point \posBoundedS in ${\bigOh{\numEdges + \numNodes \frac{\log \frac{\maxEdgeLength}{\minEdgeLength}}{\log \log \frac{\maxEdgeLength}{\minEdgeLength}}}}$ by incorporating the earlier ideas from \cite{tsitsiklis.95}.

\cite{Thorup4}  presents an integer priority queue that has constant time for the decrease key operation and is thus able to achieve a runtimes of $\bigOh{\numEdges + \numNodes \log \log \numNodes}$ and $\bigOh{\numEdges + \numNodes \log \log \maxEdgeLength}$ for integer \nonnegS and \nonnegBoundedS, respectively. Using the trick of interpreting IEEE floating point numbers as integers enables the algorithm to also solve floating point \nonnegS in time $\bigOh{\numEdges + \numNodes \log \log \numNodes}$. However, as the author of \cite{Thorup4} notes, ``for floating point numbers we [they] do not get bounds in terms of the maximal weight'' (i.e., $\maxEdgeLength$). 
We note that, as proved in Corollary \ref{co:a}, our method can be combined with the integer result of $\bigOh{\numEdges + \numNodes \log \log \maxEdgeLength}$ from \cite{Thorup4} to yields a new algorithm with a runtime of $\bigOh{\numEdges + \numNodes \log \log \maxEdgeLength/\minEdgeLength}$ for floating point \posBoundedS. 
This highlights one way our work differs from previous work. A handful of other improvements that are newly obtained by combining our work with previous integer results appear in Table~\ref{table:after}.

\subsection{Expected Times in RAM Model}

With regard to {\it expected} running times over randomizations (of edge lengths) and assuming RAM and integer \nonnegS,  \cite{Fredman.Willard2} gives expected time ${\bigOh{\numEdges\sqrt{\log \numNodes } + \numNodes}}$.
More recent results  by \cite{Thorup3} yield the expected time of $\bigOh{\numEdges + \numNodes(\log \numNodes)^{1/2 + \epsilon}}$, while \cite{Raman} and \cite{Raman2} give expected times ${\bigOh{\numEdges + \numNodes(\log \numNodes \log \log \numNodes)^{1/2}}}$ and ${\bigOh{\numEdges + \numNodes (\log \maxEdgeLength)^{1/4+\epsilon} }}$ and ${\bigOh{\numEdges + \numNodes(\log \numNodes)^{1/3+\epsilon}}}$.
\cite{Cherkassky.etal} achieves the slightly worse expected time  ${\bigOh{\numEdges + \numNodes(\log \maxEdgeLength)^{1/3+\epsilon}}}$; however, we note that this hot-queues-based solution can be combined with \cite{tsitsiklis.95} to solve floating point \nonnegS in expected time ${\bigOh{\numEdges + \numNodes(\log \frac{\maxEdgeLength}{\minEdgeLength})^{1/3+\epsilon}}}$ --- so we include this in Table~\ref{table:before}.
\cite{Thorup5} proves a general reduction from \nonnegS to sorting that leverages the randomized sorting algorithm in \cite{han.thorup} to produce an integer \nonnegS algorithm that runs in expected time $\numEdges\sqrt{\log \log \numNodes}$ (this can be extended to floating point \posS using the trick of interpreting IEEE floats as integers).

The more recent work of \cite{Thorup4} provides solutions in worst case time (and therefore also expected time) $\bigOh{\numEdges + \numNodes \log \log \numNodes}$ and $\bigOh{\numEdges + \numNodes \log \log \maxEdgeLength}$ for integer \nonnegS and \nonnegBoundedS, respectively. As noted above \cite{Thorup4} can be applied directly to floating point \nonnegS (by assuming IEEE standard floats are interpreted as integers) with runtime $\bigOh{\numEdges + \numNodes \log \log \numNodes}$, but analogous floating point results in terms of $\maxEdgeLength$ could not be achieved. On the other hand, our results from the current paper can be combined with \cite{Thorup4} to yield a new algorithm with a runtime of $\bigOh{\numEdges + \numNodes \log \log \maxEdgeLength/\minEdgeLength}$ for floating point \posBoundedS.

\subsection{Undirected Graphs (BN-USSSP)}

For the case of undirected edges with bounded positive integer weights (integer BN-USSSP) \cite{Thorup} presents a method that runs in $\bigOh{\numEdges + \numNodes}$, and then extends this to floating point BN-USSSP in \cite{Thorup2}. In \cite{Thorup2} it is noted that an alternative solution with significantly less overhead is also possible in ${\bigOh{\log(\minEdgeLength) + \alpha(\numEdges,\numNodes)\numEdges + \numNodes}}$, where $\alpha(\numEdges,\numNodes)$ is the inverse Ackermann function using $\numEdges$ and $\numNodes$.
\cite{wei.tanaka,wei.tanaka2} present modifications to \cite{Thorup2} that achieve better practical performance by removing the necessity of an ``unvisited node structure'' but do not change theoretical runtime bounds (or the constant factor). 

\subsection{Other Special Cases}

The special case of (integer or floating point) \posS involving $z$ distinct weights can be solved in time $\bigOh{\numEdges + \numNodes}$ if ${z\numNodes \leq 2\numEdges}$ and $\bigOh{\numEdges\log\frac{z\numNodes}{\numEdges} + \numNodes}$, otherwise, using  \cite{Orlin.etal}. 
Integer or floating point \nonnegS on planar graphs can be solved in $\bigOh{\numNodes\sqrt{ \numNodes}}$ with a method by \cite{Federickson}, note that ${\numEdges = \bigOh{\numNodes}}$ for planar graphs.

\subsection{Partial Orderings}

The main conceptual difference between \cite{Thorup,Thorup2} and our work is in the details of the partial orderings that are used. These differences cause both (A) theoretical ramifications regarding the subsets of SSSP for which a particular is applicable, and (B) practical differences affecting ease of implementation and performance. The particular partial ordering we investigate \partialOrdering is grounded in the notion of shortest-path-length.
Our method is designed for floating point \posS, while \cite{Thorup,Thorup2} is for integer and floating point BN-USSSP. The floating point method in \cite{Thorup2} uses the integer algorithm in \cite{Thorup} as an oracle; our work extends the basic oracle concept to a greater subset of SSSP using a completely different technique.

The observation that \posS can be solved using \partialOrdering was first observed by \cite{dinic.78}, who used the trick of dividing by minimum edge weight to improve the use of bucket based integer priority queues for solving integer SSSP. \cite{tsitsiklis.95} proves the result for floating point \posS and uses it in a variation of \cite{Dial.ACM69} to solve floating point \posBoundedS; although, without explicitly considering the runtime of the resulting algorithm with respect to $\maxEdgeLength$ or $\minEdgeLength$. 
Our work can be also be considered an application of \cite{dinic.78} and a generalization of \cite{tsitsiklis.95}. We show that any monotone integer priority queue that creates a \fullOrdering for integer \nonnegS (or \posS) can be used as an oracle to create a \partialOrdering for floating point \posS, and thereby solve floating point \posS. Finally, we prove that \partialOrdering is insufficient to solve floating point \nonnegS and that any related ordering that correctly solves floating point \posS must create a \fullOrdering in the worst-case.

\section{Remarks}

We now discuss a few points that we find interesting.

\subsection{Integer \nonnegS vs.\ Floating Point \nonnegS}

Even though we have shown that heaps designed to solve integer \nonnegS can immediately be used to solve floating point \posS in general, we fall short of achieving a similarly general oracle method that can solve floating point \nonnegS.
The reason for this discrepancy is intimately related to zero-length edges and enforcement of Condition A.  It is impossible to make the granularity of the partial ordering small enough (i.e.,  without creating full ordering) to guarantee that processing nodes based on ${\key(\node) = \guess(\node)/\minEdgeLength}$ instead of $\guess(\node)$ yields a correct algorithm.
Indeed, this result extends to any method relying on ${\key(\node) = \guess(\node)/x}$, regardless of how $x$ is chosen. There will always be problems when ${x = 0}$. 

More interesting is the result from Lemma~\ref{lem:nec}, which proves that Condition A is necessary for any ordering that correctly solves \nonnegS. 
Note that this does not conflict with previous results for integer \nonnegS case --- it is trivial to show that full orderings guarantee Condition A, and all state-of-the-art integer \nonnegS algorithms based on shortest-path-length partial orders use full orderings. 
Indeed, the main ramification of Lemma~\ref{lem:nec} appears to be that it is impossible for any version of Dijkstra's algorithm to solve general-case floating point \nonnegS by using a (non-full) partial ordering based on shortest-path length. Thus, future work on the general floating point \nonnegS problem must either: (1) use an alternative method that in which orderings consider more than just best-path-length (as \cite{Thorup} has done for the USSSP case), or (2) attempt to create faster full-order inducing heaps that use floating point keys directly. 

\subsection{Floats, Ints and the IEEE Standard}

It has often been remarked, e.g., by \cite{Thorup2}, that given the IEEE standards for floating point and integer representations it is often possible to use floating point numbers directly in some integer heaps by reinterpreting their binary values directly as integers.
This is obviously true for comparison based algorithms, but is even true for some of the bucket-based structures such as Radix heaps.

In contrast, the idea presented in the current paper is representation agnostic. This is theoretically important because there are a variety of ways to represent real numbers for computational purposes other than IEEE standard, for example, unnormalized versions of binary and decimal floating point numbers and more exotic alternatives such as: continued fractions \cite{vuillemin1990exact}, logarithmic \cite{swartzlander1975sign}, semi-logarithmic \cite{muller1998semi}, level-index \cite{clenshaw1984beyond}, floating-slash \cite{matula1985finite}, and others \cite{vuillemin1994circuits,muller2009handbook}.
Our results hold for {\it all} binary representations of positive real numbers such that an $\bigOh{1}$ integer conversion operation exists, and an integer-based heap exists that can solve \nonnegS or \posS for the resulting integer representation. 


\section{Summary and Conclusions}

We provide a new proof that using Dijkstra's algorithm directly with partial ordering based on shortest-edge-length and best-path-length is sufficient to solve the floating point  positive edge weight case of the single source shortest path planning problem (SSSP) --- we note this result has been previously obtained by \cite{dinic.78} and \cite{tsitsiklis.95} using alternative analysis.

Based on this result, we present a simple yet general method that enables a large class of results from the nonnegative (and positive) integer case of SSSP to be extended to the positive floating point case of SSSP.
In particular, any integer priority queue that solves integer SSSP by creating a full ordering (by extracting nodes in a monotonically nondecreasing sequence of best-path-length key values) can be used as an oracle to solve the floating point SSSP with positive edge weights (by creating a sufficient partial ordering).
This immediately yields a handful of faster theoretical runtimes for the positive floating point case of SSSP --- both worst case and expected times and for various relationships between $\numNodes$, $\numEdges$, and $\maxEdgeLength$. The idea is easy to implement in practice, i.e., in conjunction with many practical heaps that solve integer SSSP. Moreover, it guarantees that many future advances for the integer case, both theoretical and practical, will immediately yield better results for the positive floating point case of SSSP.

Finally, we prove that Dijkstra's Algorithm must use a full ordering to solve the nonnegative floating point of SSSP in the worst case (and thus our results cannot be extended to {\it nonnegative} floating point SSSP). The latter is due to complications inherent with using zero-length edges, and suggests that future work for the floating point case of SSSP must either involve a departure from Dijkstra's Algorithm or the discovery of heaps that are able to create full orderings over floating point numbers more quickly.

\newcommand{\removeThis}[1]{}

\removeThis{

\section*{Appendix: An anonymous review of an earlier versions of this work and our response to it}

An earlier version of this manuscript (which we shall call V1) was submitted to a top computer science journal on August 11, 2015 but was not accepted. To our discontent, we received a terse review from a single reviewer and were not given the opportunity to respond. That review is now included, in its entirety, followed by our open response to it. The current version of our paper (V2) has already been updated to address the shortcomings that were highlighted in the review of V1. In particular, we have (1) incorporated a missing reference to \cite{Thorup4} and (2) better highlighted the contributions of our work. We hope that by including the review of V1 and our response to it, we will be able to satisfy any readers that have similar concerns to our anonymous reviewer.

\subsection{A review of the first version of this paper}

\hspace{1cm} \begin{minipage}{0.8\textwidth}

\it 

``The paper considers the single source shortest path problem in
directed graphs with integer or floating point weights. The paper has no new fundamental contributions. It is in particular missing the strongest current bounds on the problem from: \cite{Thorup4}.
%
%
With m edges, n nodes, and max integer weight $\maxEdgeLength$, Thorup solves the problem in $\bigOh{\numEdges+\numNodes \log \log \min(\maxEdgeLength,\numNodes)}$ time. It is a standard idea, going back to \cite{Dial.ACM69} that we can replace $\maxEdgeLength$ with $\maxEdgeLength/\minEdgeLength$ if $\minEdgeLength$ i the minimal weight.''

\end{minipage}\\

\noindent Note that we have changed the notation from the original review to reflect the notation that is used in the current paper (V2). We have also replaced the long-form citations included by the reviewer with the references to the bibliography of V2.

\subsection{Our open response to the preceding review}

First of all, we would like to thank the anonymous reviewer for the time they invested in reading our paper (V1) and for their comments. The comments provided by the reviewer have enabled us to improve and clarify our work in the hopes of resubmitting the improved version (V2) to another venue. We now respond to each of the reviewer's comments in turn. \\ 

\noindent {\bf Reviewer Comment 1:} {\it ``The paper considers the single source shortest path problem in
directed graphs with integer or floating point weights.''}\\

Our primary problem of interest is the {\it floating point} version of SSSP. We did not highlight this fact strongly enough in V1. Our focus on floating point SSSP has been clarified in V2 (in the abstract, introduction, Section~\ref{sec:heaps_orical}, related work, and conclusions). A new high-level overview of our work now appears in Section 1.1.\\

\noindent {\bf Reviewer Comment 2:} {\it ``The paper has no new fundamental contributions. It is in particular missing the strongest current bounds on the problem from \cite{Thorup4}. With $\numEdges$ edges, $\numNodes$ nodes, and max integer weight $\maxEdgeLength$, Thorup solves the problem in $\bigOh{\numEdges+\numNodes \log \log \min(\maxEdgeLength,\numNodes)}$ time.'' }\\

This comment can be divided into three different parts related to {\bf (A)} missing related work, {\bf (B)} the fundamental contributions of our work, and {\bf (C)} new runtimes for SSSP. We now address each of these individually:\\

{\bf (A)}
The reviewer is correct that in V1 we were unaware of a related work that is of critical importance, namely \cite{Thorup4}. The reviewer is also correct that 4 of the 9 ``new'' runtime bounds we reported in V1 had already been superseded by results from \cite{Thorup4} (and 6 of 10 other results we reported for earlier algorithms had also been improved upon by \cite{Thorup4}). V2 has been updated to reflect the work of \cite{Thorup4}; see for example where the work of \cite{Thorup4} now appears in tables 1 and 2, as well as the discussion of related work (Section~\ref{sec:relatedwork}).\\

{\bf (B)}
Our paper is fundamentally about presenting a general technique that allows integer priority queues to be used to solve floating point SSSP, as well as understanding why this is possible and where the technique breaks down. 
Improving runtimes for floating point SSSP is both a nice bonus that highlights how our work can be applied, and also a strong argument for the novelty of our work. However, the particular runtimes that are achieved should not themselves be considered the main point of our work. As mentioned in the response to Reviewer Comment 1, we have added additional discussion to V2 to better highlight the main point of our paper, see for example, Section 1.1. We have also added a discussion on monotonicity in priority queues which explores a few of the finer points of our work from a slightly different angle (see Section 2.3).\\

{\bf (C)}
In both V1 and V2, we only claim(ed) improvements for {\it floating point} SSSP.  The new results from \cite{Thorup4} leave intact 5 of the 9 improvements we originally reported in V1. The combination of our work with \cite{Thorup4} actually yields 2 new runtime improvements that are now presented in V2. In particular, a new algorithm for floating point SSSP with positive edge weights that has runtime (both worst-case and expected) of ${\bigOh{\numEdges + \numNodes \log \log \frac{\maxEdgeLength}{\minEdgeLength}  }}$. We note that in \cite{Thorup4} the author explicitly stated he was unable to find a floating point bound in terms of $\maxEdgeLength$, saying ``for floating point numbers we [they] do not get bounds in terms of the maximal weight,'' i.e., $\maxEdgeLength$. 

Other faster runtimes for floating point SSSP that our work enables include:
${\bigOh{\numEdges + \numNodes(\log \frac{\maxEdgeLength}{\minEdgeLength} \log \log \frac{\maxEdgeLength}{\minEdgeLength})^{1/3}}}$, which is asymptotically faster than ${\bigOh{\numEdges + \numNodes \log \log \frac{\maxEdgeLength}{\minEdgeLength}  }}$ for many values of $\maxEdgeLength$; ${\bigOh{\numNodes + \numEdges\log\log\numEdges}}$ and ${\bigOh{\numNodes + \numEdges \log \log \frac{\maxEdgeLength}{\minEdgeLength}  }}$, which are relevant when ${\numEdges \leq \numNodes}$; and  ${\bigOh{\numEdges + \numNodes(B+\frac{\maxEdgeLength}{B\minEdgeLength})}}$ and ${\bigOh{\numEdges\Delta + \numNodes(\Delta + \frac{\maxEdgeLength}{\Delta\minEdgeLength})}}$, which involve user parameters.

Finally, although we are unable to surpass the bound of  $\bigOh{\numEdges + \numNodes \log \log \numNodes}$ reported in \cite{Thorup4}, we {\it are} able to match it (for graphs with positive edge weights) using a fundamentally different technique than the one presented in \cite{Thorup4}. We note that in V1 we did {\it not} claim to achieve the fasted bound in terms of only $\numNodes$ and $\numEdges$.\\

{\bf Reviewer Comment 3:}  {\it ``It is a standard idea, going back to \cite{Dial.ACM69} that we can replace $\maxEdgeLength$ with $\maxEdgeLength/\minEdgeLength$ if $\minEdgeLength$ i the minimal weight.''}\\

Our response to this comment assumes that the reviewer meant ``we can replace $\maxEdgeLength$ with $\maxEdgeLength/\minEdgeLength$ if $\minEdgeLength$ {\it is} the minimal weight'' (i.e., that `i' is not an index on $\minEdgeLength$).

We agree that replacing the maximum edge length $\maxEdgeLength$ with $\maxEdgeLength/\minEdgeLength$, where $\minEdgeLength$ is the minimum edge length, is a well known idea that goes back to \cite{Dial.ACM69}.  Indeed, in both V1 and V2, we explicitly state, e.g.,  in our abstract, that ``The method relies on a result known to \cite{dinic.78} and \cite{tsitsiklis.95}; that the single-source shortest path planning problem (SSSP) can be solved using a partial ordering of nodes (despite that full orderings are typically used in practice).'' Similar statements also appear(ed) in Section 1.1 (V2) and 2.2 (V1 and V2). 

It is clear that in V1 we failed to convey that our results go beyond simply using $\maxEdgeLength/\minEdgeLength$ instead of $\maxEdgeLength$ for the analysis of floating point SSSP algorithms based on Dial's work (as is done by [Tsitsiklis 1995]). And also that they go beyond using $\lfloor\frac{\guess(\node)}{\minEdgeLength} \rfloor$ instead of $\guess(\node)$ within bucketing algorithms, in general, as was shown by \cite{dinic.78}.

Prior to our work, it was not well understood that {\it any} monotone {\it integer} priority queue could be used to produce a {\it floating point} priority queue sufficient to solve {\it floating point} SSSP with positive edge weights.  Consider the fact that Thorup (an undisputed expert on SSSP) remarks in \cite{Thorup4} that ``for floating point numbers we [they] do not get bounds in terms of the maximal weight'' (i.e., $\maxEdgeLength$). We can assume that Thorup was familiar with Dinic's insight into partial orderings given Thorup's detailed remarks on \cite{dinic.78} in the earlier \cite{Thorup}. 
The fact that Thorup, a word-class expert with a good understanding of Dinic's ideas, was unable to achieve the bound we find in the current paper is strong evidence that our work is providing something new beyond \cite{dinic.78}.
In particular, there are two additional insights needed to leverage Dinic's partial orders to achieve floating point results:
\begin{enumerate}
\item Dinic's idea can be extended from bucket based priority queues to {\it any} monotonic priority queue.
\item By doing so it is possible to turn any monotonic integer priority queue into a floating point priority queue that can solve floating point SSSP with positive edge weights. 
\end{enumerate}
Our paper formalizes and studies these two insights.

}

\section*{Acknowledgments}

This work was supported by the Control Science Center of Excellence at the Air Force Research Laboratory (AFRL), the National Science Foundation (NSF) grant IIP-1161029, and the Center for Unmanned Aircraft Systems. The majority of this work was completed while the author was ``in residence'' at AFRL. 

\bibliographystyle{apalike}
\bibliography{bibliography}{}

\begin{thebibliography}{}

\bibitem[Ahuja et~al., 1990]{Ahuja.etal}
Ahuja, R.~K., Mehlhorn, K., Orlin, J., and Tarjan, R.~E. (1990).
\newblock Faster algorithms for the shortest path problem.
\newblock {\em Journal of the ACM (JACM)}, 37(2):213--223.

\bibitem[Cherkassky et~al., 1996]{Cherkassky.etal.MP96}
Cherkassky, B.~V., Goldberg, A.~V., and Radzik, T. (1996).
\newblock Shortest paths algorithms: Theory and experimental evaluation.
\newblock {\em Mathematical programming}, 73(2):129--174.

\bibitem[Cherkassky et~al., 1999]{Cherkassky.etal}
Cherkassky, B.~V., Goldberg, A.~V., and Silverstein, C. (1999).
\newblock Buckets, heaps, lists, and monotone priority queues.
\newblock {\em SIAM Journal on Computing}, 28(4):1326--1346.

\bibitem[Clenshaw and Olver, 1984]{clenshaw1984beyond}
Clenshaw, C.~W. and Olver, F.~W. (1984).
\newblock Beyond floating point.
\newblock {\em Journal of the ACM (JACM)}, 31(2):319--328.

\bibitem[Dial, 1969]{Dial.ACM69}
Dial, R.~B. (1969).
\newblock Algorithm 360: Shortest-path forest with topological ordering [h].
\newblock {\em Communications of the ACM}, 12(11):632--633.

\bibitem[Dijkstra, 1959]{Dijkstra}
Dijkstra, E.~W. (1959).
\newblock A note on two problems in connexion with graphs.
\newblock {\em Numerische mathematik}, 1(1):269--271.

\bibitem[Dinic, 1978]{dinic.78}
Dinic, E.~A. (1978).
\newblock Economical algorithms for fidnding shortest paths in a network.
\newblock In Popkov, Y.~S. and Shmulyian, B.~L., editors, {\em Transportation
  Modeling Systems}, pages 36--44, Moscow. Institute for System Studies.

\bibitem[Federickson, 1987]{Federickson}
Federickson, G.~N. (1987).
\newblock Fast algorithms for shortest paths in planar graphs, with
  applications.
\newblock {\em SIAM Journal on Computing}, 16(6):1004--1022.

\bibitem[Fredman and Tarjan, 1987]{Fredman.Tarjan}
Fredman, M.~L. and Tarjan, R.~E. (1987).
\newblock Fibonacci heaps and their uses in improved network optimization
  algorithms.
\newblock {\em Journal of the ACM (JACM)}, 34(3):596--615.

\bibitem[Fredman and Willard, 1990]{Fredman.Willard}
Fredman, M.~L. and Willard, D.~E. (1990).
\newblock Trans-dichotomous algorithms for minimum spanning trees and shortest
  paths.
\newblock In {\em Foundations of Computer Science, 1990. Proceedings., 31st
  Annual Symposium on}, pages 719--725. IEEE.

\bibitem[Fredman and Willard, 1993]{Fredman.Willard2}
Fredman, M.~L. and Willard, D.~E. (1993).
\newblock Surpassing the information theoretic bound with fusion trees.
\newblock {\em Journal of computer and system sciences}, 47(3):424--436.

\bibitem[Han and Thorup, 2002]{han.thorup}
Han, Y. and Thorup, M. (2002).
\newblock Integer sorting in o (n√(log log n)) expected time and linear
  space.
\newblock In {\em Foundations of Computer Science, 2002. Proceedings. The 43rd
  Annual IEEE Symposium on}, pages 135--144. IEEE.

\bibitem[Matula and Kornerup, 1985]{matula1985finite}
Matula, D.~W. and Kornerup, P. (1985).
\newblock Finite precision rational arithmetic: Slash number systems.
\newblock {\em IEEE transactions on computers}, (1):3--18.

\bibitem[Muller et~al., 2009]{muller2009handbook}
Muller, J.-M., Brisebarre, N., De~Dinechin, F., Jeannerod, C.-P., Lefevre, V.,
  Melquiond, G., Revol, N., Stehl{\'e}, D., and Torres, S. (2009).
\newblock {\em Handbook of floating-point arithmetic}.
\newblock Springer Science \& Business Media.

\bibitem[Muller et~al., 1998]{muller1998semi}
Muller, J.-M., Scherbyna, A., and Tisserand, A. (1998).
\newblock Semi-logarithmic number systems.
\newblock {\em Computers, IEEE Transactions on}, 47(2):145--151.

\bibitem[Orlin et~al., 2010]{Orlin.etal}
Orlin, J.~B., Madduri, K., Subramani, K., and Williamson, M. (2010).
\newblock A faster algorithm for the single source shortest path problem with
  few distinct positive lengths.
\newblock {\em Journal of Discrete Algorithms}, 8(2):189--198.

\bibitem[Otte, 2015]{otte.tr15}
Otte, M. (2015).
\newblock Modifying dijkstra's algorithm to solve many instances of sssp in
  linear time.
\newblock Technical report, University of Colorado.

\bibitem[Raman, 1996]{Raman}
Raman, R. (1996).
\newblock Priority queues: Small, monotone and trans-dichotomous.
\newblock In {\em Algorithms—ESA'96}, pages 121--137. Springer.

\bibitem[Raman, 1997]{Raman2}
Raman, R. (1997).
\newblock Recent results on the single-source shortest paths problem.
\newblock {\em ACM SIGACT News}, 28(2):81--87.

\bibitem[Swartzlander and Alexopoulos, 1975]{swartzlander1975sign}
Swartzlander, E.~E. and Alexopoulos, A.~G. (1975).
\newblock The sign/logarithm number system.
\newblock {\em IEEE Transactions on Computers}, (12):1238--1242.

\bibitem[Thorup, 1996]{Thorup3}
Thorup, M. (1996).
\newblock On ram priority queues.
\newblock 81:59.

\bibitem[Thorup, 1999]{Thorup}
Thorup, M. (1999).
\newblock Undirected single-source shortest paths with positive integer weights
  in linear time.
\newblock {\em Journal of the ACM (JACM)}, 46(3):362--394.

\bibitem[Thorup, 2000a]{Thorup2}
Thorup, M. (2000a).
\newblock Floats, integers, and single source shortest paths.
\newblock {\em Journal of Algorithms}, 35(2):189--201.

\bibitem[Thorup, 2000b]{Thorup5}
Thorup, M. (2000b).
\newblock On ram priority queues.
\newblock {\em SIAM Journal on Computing}, 30(1):86--109.

\bibitem[Thorup, 2004]{Thorup4}
Thorup, M. (2004).
\newblock Integer priority queues with decrease key in constant time and the
  single source shortest paths problem.
\newblock {\em Journal of Computer and System Sciences}, 3(69):330--353.

\bibitem[Tsitsiklis, 1995]{tsitsiklis.95}
Tsitsiklis, J.~N. (1995).
\newblock Efficient algorithms for globally optimal trajectories.
\newblock {\em {IEEE Transactions on Automatic Control}}, 40(9).

\bibitem[Vuillemin, 1990]{vuillemin1990exact}
Vuillemin, J.~E. (1990).
\newblock Exact real computer arithmetic with continued fractions.
\newblock {\em Computers, IEEE Transactions on}, 39(8):1087--1105.

\bibitem[Vuillemin, 1994]{vuillemin1994circuits}
Vuillemin, J.~E. (1994).
\newblock On circuits and numbers.
\newblock {\em Computers, IEEE Transactions on}, 43(8):868--879.

\bibitem[Wei and Tanaka, 2013]{wei.tanaka}
Wei, Y. and Tanaka, S. (2013).
\newblock An improved thorup shortest paths algorithm with a modified component
  tree.
\newblock In {\em Natural Computation (ICNC), 2013 Ninth International
  Conference on}, pages 1160--1165. IEEE.

\bibitem[Wei and Tanaka, 2014]{wei.tanaka2}
Wei, Y. and Tanaka, S. (2014).
\newblock Improvement of thorup shortest path algorithm by reducing the depth
  of a component tree.
\newblock {\em Journal of Advances in Computer Networks}, 2(2).

\bibitem[Williams, 1964]{Williams}
Williams, J. W.~J. (1964).
\newblock Heapsort.
\newblock {\em Communications of the ACM}, 7:347--348.

\end{thebibliography}


\end{document}